\documentclass[onecolumn]{IEEEtran}
\IEEEoverridecommandlockouts
\usepackage[final]{graphicx}
\usepackage{cite}
\usepackage{amssymb}
\usepackage{amsmath}
\usepackage{amsthm}
\usepackage{amsfonts}
\usepackage{bm}
\usepackage{float}
\usepackage{filecontents}
\usepackage{algorithm}
\usepackage{algpseudocode}
\usepackage{algorithmicx}
\usepackage{subfigure}
\usepackage{subfigmat}

\title{Symbol-Level Precoding Design Based on Distance Preserving Constructive Interference Regions}

\author{Alireza~Haqiqatnejad,~\emph{Student Member,~IEEE},~Farbod~Kayhan,~and Bj\"{o}rn~Ottersten,~\emph{Fellow,~IEEE}
	\thanks{The authors are with Interdisciplinary Centre for Security, Reliability and Trust (SnT), University of Luxembourg, L-1855 Luxembourg. (email:~\{alireza.haqiqatnejad,farbod.kayhan,bjorn.ottersten\}@uni.lu).}
	\thanks{Part of this work has been submitted to the 19\textsuperscript{th} IEEE International Workshop on Signal Processing Advances in Wireless Communications (SPAWC), Kalamata, 2018.}
}

\newtheorem{theorem}{Theorem}
\newtheorem{lemma}{Lemma}
\newtheorem{property}{Property}

\newtheorem{proposition}{Proposition}

\newcommand{\Deee} {\mathrm{\boldsymbol{\delta}}}

\newcommand{\Gammaaa} {\mathrm{\mathbf{\Gamma}}}
\newcommand{\Sigmaaa} {\mathrm{\mathbf {\Sigma}}}
\newcommand{\HHH} {\mathrm{\mathbf{H}}}
\newcommand{\bbb}{\mathrm{\mathbf{b}}}
\newcommand{\WWW}{\mathrm{\mathbf{G}}}
\newcommand{\WWWW}{\mathrm{\mathbf{W}}}

\newcommand{\A}{\mathrm{\mathbf{A}}}
\newcommand{\aaa}{\mathrm{\mathbf{a}}}
\newcommand{\h}{\mathrm{\mathbf{h}}}

\newcommand{\s}{\mathrm{\mathbf{s}}}
\newcommand{\vvv}{\mathrm{\mathbf{v}}}

\newcommand{\conv}{\mathrm{\mathbf{conv}}}

\newcommand{\bd}{\mathrm{\mathbf{bd}}}
\newcommand{\interior}{\mathrm{\mathbf{int}}}
\newcommand{\ccc}{\mathrm{\mathbf{c}}}
\newcommand{\x}{\mathrm{\mathbf{x}}}
\newcommand{\yyy}{\mathrm{\mathbf{y}}}
\newcommand{\uuu}{\mathrm{\mathbf{u}}}
\newcommand{\C}{\mathbb{C}}
\newcommand{\D}{\mathcal{D}}

\newcommand{\diag}{\mathop{\mathrm{diag}}}

\begin{document}
\maketitle

\IEEEpeerreviewmaketitle

\begin{abstract}
In this paper, we investigate the symbol-level precoding (SLP) design problem  in the downlink of a multiuser multiple-input single-output (MISO) channel. We consider generic constellations with any arbitrary shape and size, and confine ourselves to one of the main categories of constructive interference regions (CIR), namely, distance preserving CIR (DPCIR). We provide a comprehensive study of DPCIRs and derive some properties for these regions. Using these properties, we first show that any signal in a given DPCIR has a norm greater than or equal to the norm of the corresponding constellation point if and only if the convex hull of the constellation contains the origin. It is followed by proving that the power of the noiseless received signal lying on a DPCIR is a monotonic strictly increasing function of two parameters relating to the infinite Voronoi edges. Using the convex description of DPCIRs and their properties, we formulate two design problems, namely, the SLP power minimization with signal-to-interference-plus-noise ratio (SINR) constraints, and the SLP SINR balancing problem under max-min fairness criterion. The SLP power minimization based on DPCIRs can straightforwardly be written as a quadratic program (QP). We provide a simplified reformulation of this problem which is less computationally complex. The SLP max-min SINR, however, is non-convex in its original form, and hence difficult to tackle. We propose several alternative optimization approaches, including semidefinite program (SDP) formulation and block coordinate descent (BCD) optimization. We discuss and evaluate the loss due to the proposed alternative methods through extensive simulation results.

\end{abstract}

\begin{IEEEkeywords}
Distance preserving constructive interference region, downlink multiuser MISO, power minimization, SINR balancing, symbol-level precoding.
\end{IEEEkeywords}


\section{Introduction} \label{sec:intro}
Multiuser interference (MUI) is a major performance limiting factor in the downlink of multiuser systems which may adversely affect the achievable transmission rate of individual users. One approach to mitigate the MUI is to precompensate for its undesired effect on the received signal through some signal processing at the transmitter \cite{tb_opt}, which is commonly known as multiuser precoding.
In general, multiuser precoding design can be expressed as a constrained optimization problem \cite{tb_convex, tb_sol_str}. The design problem aims at keeping a balance between some system-centric and user-centric objectives/requirements, depending on the network's operator strategy. Power and sum-rate are often regarded as system-centric criteria \cite{tb_coor}. Transmit power is considered, for example, to control the inter-cell interference in multicell wireless networks, and sum-rate is a measure of the overall system performance. On the other hand, as a user-centric criterion, signal-to-interference-plus-noise ratio (SINR) is an effective measure of quality-of-service (QoS) in multiuser interference channels \cite{tb_mul}. In particular, both bit error rate (BER) and capacity, which are two relevant criteria from a practical point of view, are closely related with maximizing SINR \cite{tb_conic}. Taking into account different types of optimization criteria, some well-known formulations for the multiuser precoding problem are QoS-constrained power minimization \cite{tb_eemax, tb_vis}, SINR balancing \cite{tb_conic, tb_sinr, tb_itr}, and (weighted) sum-rate maximization \cite{tb_coor, wsrmax, tb_has}. In this paper, we primarily focus on the power minimization problem with SINR constraints and the SINR balancing problem based on max-min fairness criterion.

Conventional multiuser precoding techniques try to exploit the knowledge of the channel in order to suppress the MUI. A crucial assumption is therefore the availability of instantaneous or stochastic channel state information (CSI) at the transmitter \cite{mimo_big}. However, the MUI may not always considered to be harmful; on the contrary, following the notion of constructive interference \cite{slp_cdma}, one can turn the MUI into a useful source of signal power instead of treating it as an unwanted distortion \cite{tb_green}. To gain benefit from the potential advantage of constructive interference, it has been recently suggested to design the precoder on a symbol-level basis as a promising alternative to linear block-level precoding \cite{slp_rot, slp_chr, slp_con}. Such a design concept is referred to as symbol-level precoding (SLP). Beside the CSI, the symbol-level design also requires the instantaneous data information (DI) of all users which is readily available at the transmitter. When compared to conventional schemes, it has been shown that significant gains can be achieved at the expense of slightly higher transmitter complexity \cite{slp_rot}, but without re-designing the receiver. While the linear structure of the precoder can be preserved under SLP, one may also form a virtual multicast formulation to directly find the optimal transmit vector, as proposed in \cite{slp_con}, instead of designing the precoder.

The symbol-level design of a multiuser precoder generally involves an optimization problem for each possible combination of the users' symbols. The optimization constraints are so designed to push each user's (noiseless) received signal to a predefined region, called constructive interference region (CIR), enhancing (or guaranteeing a certain level of) detection accuracy. Therefore, the constraints, and hence the SLP problem, highly depend on the adopted constellation. Furthermore, the objective function and the constraints may vary for different problems having particular design criterion and requirements. The SLP problem minimizing the total transmit power has been studied for various constellations, including PSK \cite{slp_chr, slp_con, slp_chr_psk, slp_sec}, QAM \cite{slp_multi, slp_qam}, and APSK \cite{slp_apsk}. For PSK constellations, the minimization of peak per-antenna transmit power is addressed in \cite{slp_pp}. A generic formulation for power minimization problem, not depending on constellation shape and order, is presented in \cite{slp_gen} for both total and peak per-antenna power constraints.

SINR balancing in multiuser multiple-input single-output (MISO) channels is in general more challenging and has been widely investigated for conventional precoding techniques. This problem has been addressed in both multicast (single data stream) and unicast (independent data streams) downlink scenarios \cite{tb_mul, tb_conic, tb_sinr, tb_app}. The problem is not convex in general and is known to be NP-hard \cite{tb_mul}. Several alternate optimization approaches have been proposed in the literature. We kindly refer the readers to \cite{slp_spawc} for a short review on SINR balancing in conventional precoding. In particular, for downlink unicast channels, it is shown in \cite{tb_conic} that the power minimization and the max-min SINR are inverse problems. 

The SINR balancing problem for SLP schemes has not been addressed extensively in the literature. In \cite{slp_con}, the non-convex SLP max-min SINR is solved using its relation to the power minimization via a bisection search. The method is only applicable to PSK constellations (more precisely, to constant envelope modulations) and suffers from high computational complexity. This problem is also addressed in \cite{slp_chr} and a second-order cone program (SOCP) formulation is proposed for PSK constellations. Nevertheless, there is no general solution method or convex formulation for the SLP max-min SINR problem being valid for all generic constellations. 

The main contributions of this paper are as follows:
\begin{itemize}
	\item[1)] We develop the previous work in \cite{slp_gen} through fully characterizing a general family of CIRs, namely, distance preserving CIR (DPCIR). We derive some properties for these regions which apply to any given constellation. The main property states that the norm of any signal in a given unbounded CIR is a monotonically increasing function of two parameters related to the corresponding infinite Voronoi edges, under the necessary and sufficient condition that the convex hull of the constellation contains the origin.
	\item[2)] We study the SLP design criterion from a system-level point of view and discuss the feasibility of QoS provisioning in a resource-constrained multiuser downlink channel through deriving a sufficient feasibility condition. This is followed by providing some reformulations of the DPCIR-based SLP power minimization problem.
	\item[3)] Using the properties of DPCIRs, we show that by fixing a subset of variables in the optimization problem, the SLP max-min SINR can be reduced to a convex problem. Based on this, we bound the search interval to find an approximate solution from a finite discretized candidate set. We further simplify the solution method by providing alternative, but less computationally expensive, optimization approaches in order to achieve sub-optimal solutions for the original problem. Two methods are proposed and evaluated, namely, semidefinite (SDP) formulation and block coordinate descent (BCD) optimization.
	\item[4)] We arrange all the optimization problems in a general form which is indifferent to the type of constellation.
\end{itemize}

The remainder of this paper is organized as follows. In Section \ref{sec:sysmodel}, we describe our system model and define the problems of interest. In Section \ref{sec:cir}, we overview the DPCIRs and characterize them for any given constellation point. We further derive and prove some properties for these regions. We address the SLP design problems in Section \ref{sec:slp}, which includes discussions on the power minimization and proposing alternative solution methods for the SINR balancing. In Section \ref{sec:sim}, we provide some simulation results. Finally, we conclude the paper in Section \ref{sec:conc}.

The following notations are used in the rest of this paper. We use uppercase and lowercase bold-faced letters to denote matrices and vectors respectively, and lowercase normal letters to denote scalars. For matrices and vectors, $[\,\cdot\,]^H$ and $[\,\cdot\,]^T$ denote conjugate transpose and transpose operators, respectively. For vectors, $\|\cdot\|_2$ and $\|\cdot\|_\infty$ represent the $l_2$ norm and the $l_\infty$ norm, and $\succeq$ (or $\succ$) denotes componentwise inequality. For any vector $\vvv$, $\mathrm{diag}(\vvv)$ represents a square matrix with $\vvv$ on its main diagonal and zero off-diagonal elements. For a group of vectors $\vvv_1,...,\vvv_K$, $\mathrm{blkdiag}(\vvv_1,...,\vvv_K)$ represents a square diagonal matrix where its diagonal elements are $\mathrm{diag}(\vvv_1),...,\mathrm{diag}(\vvv_K)$. Operators $|\cdot|$, $\Re\{\cdot\}$, $\Im\{\cdot\}$ and $(\cdot)^*$ denote the respectively amplitude, real part, imaginary part and conjugate of a complex argument. Symbols $\mathbf{0}$, $\mathbf{1}$ and  $\mathbf{I}$ stand for all-zeros vector, all-ones vector and identity matrix of appropriate dimension. For any set $\mathcal{A}$, $|\mathcal{A}|$ denotes the cardinality of $\mathcal{A}$. $\mathbb{R}$ and $\mathbb{C}$ represent the sets of real and complex numbers, and $\mathbb{R}_+$ and $\mathbb{R}_{++}$ represent the sets of non-negative and positive real numbers, respectively. The expectation operator is denoted by $\mathbb{E}\{\cdot\}$.


\section{System Model and Problem Definition} \label{sec:sysmodel}

We consider the downlink of a multiuser MISO unicast channel, where a single base station (BS) sends independent data streams to $K$ users. The BS is equipped with $N$ transmit antennas while each user has a single receive antenna. A block fading channel is assumed between the BS's transmit antennas and the $k$-th user, where the channel vector is denoted by $\h_k\in\C^{1\times N}$. It is further assumed that perfect channel knowledge is available to the BS.

At a given symbol time, $K$ independent symbols are to be sent to $K$ users (throughout the paper, we drop the symbol's time index to simplify the notation). We collect these symbols in users' symbol vector $\s=[s_1,\ldots,s_K]^T\in\C^{K\times1}$ with $s_k$ denoting the symbol intended for the $k$-th user. Each symbol $s_k$ is drawn from a finite equiprobable two-dimensional constellation set. Without loss of generality, we assume an $M$-ary constellation set $\chi=\{x_i|x_i\in\mathbb{C}\}_{i=1}^M$ with unit average power for all $K$ users. 
The user's symbol vector $\s$ is mapped onto $N$ transmit antennas. This is done by a symbol-level precoder, yielding the transmit vector $\uuu=[u_1,\ldots,u_K]^T\in\C^{N\times1}$, as depicted in Fig. \ref{fig:0}. The received signal at the receiver of user $k$ is then
\begin{equation}\label{eq:r}
r_k = \h_k\uuu+w_k, k=1,...,K,
\end{equation}
where $w_k\sim\mathcal{CN}(0,\sigma_k^2)$ is the complex additive white Gaussian noise at the $k$-th receiver. From the received scalar $r_k$, the user $k$ may detect its own symbol $s_k$ by applying the single-user maximum-likelihood (ML) decision rule.

The functionality of a symbol-level precoder is to instantaneously design the signal to be transmitted at each symbol time based on a constrained optimization problem. The solution of this problem, i.e., the transmit vector $\uuu$, is in general a function of instantaneous DI and CSI as well as the set of given system constraints or user-specific requirements.

\begin{figure}
	\centering
	\includegraphics[width=0.7\columnwidth]{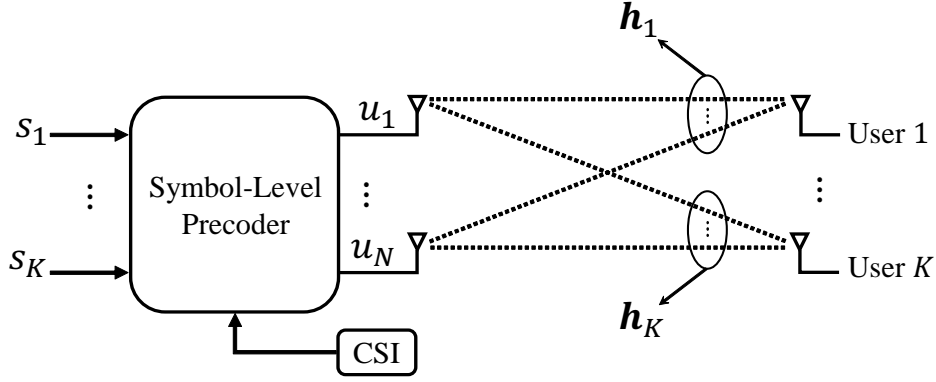}
	\caption{SLP-based diagram of a downlink multiuser MISO unicast channel.}
	\label{fig:0}
\end{figure}
 
In the SLP power minimization problem, the user-specific requirements are individual SINR thresholds that guarantee the reliable communication for each user. These thresholds impose some constraints on the design problem. To be more specific, the SINR-related constraints can be expressed as
\begin{equation}\label{eq:raw}
\uuu^H\h_k^H\h_k\uuu \geq \sigma_k^2\gamma_k\, s_k^* s_k,\;k = 1,...,K,
\end{equation}
where $\gamma_k$ is the SINR threshold for the $k$-th user. It should be noted that the SINR thresholds $\{\gamma_k\}_{k=1}^K$ typically refer to the long-term (e.g., frame-level) SINRs, i.e., the average received SINR over all the symbols in a frame. However, for sufficiently large frames (which is often the case in practice) we have $\mathbb{E}\{s_k^* s_k\}\rightarrow1, k = 1,...,K$, and hence the symbol-level constraints \eqref{eq:raw} satisfy the frame-level SINR thresholds $\mathbb{E}\{\uuu^H\h_k^H\h_k\uuu\} \geq \sigma_k^2\gamma_k, k = 1,...,K$, with the expectations being taken over the entire frame. Therefore, one can think of the symbol-level constraints \eqref{eq:raw} as a conservative way to meet the frame-level SINR requirements.

Considering \eqref{eq:raw}, the SLP power minimization problem for a generic constellation can be formulated as
\begin{equation}\label{eq:raw2}
\begin{aligned}
\underset{\uuu}{\mathrm{minimize}} & \quad f(\uuu) \\
\mathrm{s.t.} &\quad \h_k\uuu\in \sigma_k\sqrt{\gamma_k}\;\D_k,\;k = 1,...,K, \\
\end{aligned}
\end{equation}
where $\D_k$ represents the CIR associated with symbol $s_k$, and the objective function $f(\uuu)$ can be either $\uuu^H\uuu$ or $\|\uuu\|_\infty^2$ depending on whether the total or the peak (per-antenna) transmit power is minimized. A sufficient condition under which any solution to \eqref{eq:raw2} satisfies the SINR constraints \eqref{eq:raw} is that the amplitude of any point in $\D_k$ is at least equal to $|s_k|=\sqrt{s_k^* s_k}$, for all $k=1,...,K$.

The SLP SINR balancing problem, on the other hand, aims to service all the users in a fair manner while a system-centric restriction is usually considered to be the total transmit power. In particular, under the max-min fairness criterion, the goal is to maximize the worst SINR among all users subject to the power constraint. This leads to the following general design formulation
\begin{equation}\label{eq:raw3}
\begin{aligned}
\underset{\uuu}{\mathrm{maximize}} & \quad \underset{k}{\min}{\left\{\frac{\uuu^H\h_k^H\h_k\uuu}{\sigma_k^2}\right\}_{k=1}^K} \\
\mathrm{s.t.} &\quad \h_k\uuu\in \sigma_k\;\D_k,\;k = 1,...,K, \\
& \quad \uuu^H\uuu \leq P_{\mathrm{max}},
\end{aligned}
\end{equation}
where $P_{\mathrm{max}}$ is the downlink total power budget.

We will formulate and discuss both the problems \eqref{eq:raw2} and \eqref{eq:raw3} in Section \ref{sec:sim}, assuming the CIRs to be distance preserving. To this end, we first present a detailed study of the CIRs which enables us to exploit their properties in order to properly form the constraints of the SLP optimization problem.
\section{Distance Preserving Constructive Interference Regions}\label{sec:cir}

In this section, we provide an in-depth overview of a general category of CIRs, namely, DPCIR, and develop their characterization in \cite{slp_gen} by deriving some of their properties. The main results of this section are stated in Lemma \ref{lem:2}, Lemma \ref{lem:3} and Theorem \ref{thm:1}. The proofs have been previously presented in \cite{slp_spawc}. For the sake of completeness, we provide the proofs also in this paper in appendices \ref{app:lem2}-\ref{app:thm}.

Hereafter, we denote each complex-valued constellation point by its equivalent real-valued vector notation, hence the set of points in $\chi$ is denoted by $\{\x_i|\x_i\in\mathbb{R}^2\}_{i=1}^M$.
For the equiprobable constellation set $\chi$, the ML decision rule corresponds to the Voronoi regions of $\chi$ which are bounded by hyperplanes. For a given constellation point $\x_i$ and one of its neighboring points $\x_j$, the hyperplane separating the Voronoi regions of $\x_i$ and $\x_j$ is given by $\{\x\mid \x\in\mathbb{R}^2, \aaa_{i,j}^T \x=b_{i,j}\}$, where $\aaa_{i,j}=\x_i-\x_j$ (or any non-zero scalar multiplication of $\x_i-\x_j$), and $b_{i,j}=\aaa_{i,j}^T(\x_i+\x_j)/2$. This hyperplane indicates a decision boundary (Voronoi edge) between $\x_i$ and $\x_j$, which splits $\mathbb{R}^2$ plane into two halfspaces. The closed halfspace that contains the decision region of $\x_i$ is represented as
\begin{equation}\label{eq:hs}
\mathcal{H}_{i,j,\text{\tiny ML}}=\{\x\mid \x\in\mathbb{R}^2, \aaa_{i,j}^T \x\geq b_{i,j}\},
\end{equation}
where $\aaa_{i,j}$ is the inward normal and $b_{i,j}$ determines the offset from the origin. The Voronoi region of $\x_i$ is then given by intersecting all the halfspaces \eqref{eq:hs} resulting from the neighboring points of $\x_i$, i.e.,
\begin{equation}\label{eq:cap}
\begin{aligned}
\mathcal{D}_{i,\text{\tiny ML}}&=\bigcap_{\x_j\in\mathcal{S}_i}\mathcal{H}_{i,j,\text{\tiny ML}}\\
&=\left\{\x\mid \x\in\mathbb{R}^2, \aaa_{i,j}^T \x\geq b_{i,j}, \forall j\in\mathcal{J}_i\right\},
\end{aligned}
\end{equation}
where $\mathcal{J}_i=\{j|\x_j\in\mathcal{S}_i\}$ and $\mathcal{S}_i$ denotes the set of neighboring points of $\x_i$ with $|\mathcal{S}_i|=|\mathcal{J}_i|=M_i$ . Each Voronoi region can be either an unbounded or bounded polyhedron, depending on the relative location of $\x_i$ in $\chi$. It can be easily verified that the Voronoi regions are always convex sets \cite{convex_boyd}. The Voronoi region \eqref{eq:cap} can be expressed in a more compact form as
\begin{equation}\label{eq:capc}
\mathcal{D}_{i,\text{\tiny ML}}=\left\{\x\mid \x\in\mathbb{R}^2, \A_i \x\succeq \bbb_i\right\},
\end{equation}
where $\A_i\in\mathbb{R}^{M_i\times2}$ and $\bbb_i\in\mathbb{R}^{M_i}$ contain $\aaa_{i,j}^T$ and $b_{i,j}$, respectively, for all $j\in\mathcal{J}_i$.
The halfspace representation of ML decision regions in \eqref{eq:capc} can be used to describe the DPCIRs \cite{slp_gen}, as will be explained in the following.

The distance preserving margin between $\x_i$ and $\x_j$, by definition, is equal to $\frac{d_{i,j}}{2}$, where $d_{i,j}$ denotes the original distance between the two constellation points. Accordingly, given the Voronoi hyperplane $\{\x\mid \x\in\mathbb{R}^2, \aaa_{i,j}^T \x=b_{i,j}\}$, the corresponding distance preserving hyperplane can be represented by $\left\{\x\mid \x\in\mathbb{R}^2, \aaa_{i,j}^T \x=b_{i,j}+c_{i,j}\right\}$, where $c_{i,j}=\frac{d_{i,j}\|\aaa_{i,j}\|_2}{2}$. These two hyperplanes are parallel to each other with an orthogonal distance of $\frac{c_{i,j}}{\|\aaa_{i,j}\|_2}$ in the direction of $\aaa_{i,j}$. The resulting closed halfspace is then given by
\begin{equation}\label{eq:capdp}
\mathcal{H}_{i,j,\text{\tiny DP}}=\left\{\x\mid \x\in\mathbb{R}^2, \aaa_{i,j}^T \x\geq b_{i,j}+c_{i,j}\right\}.
\end{equation}
Intersecting \eqref{eq:capdp} over all the neighboring points of $\x_i$ gives the associated DPCIR as
\begin{equation}\label{eq:dpcirlinineq}
\begin{aligned}
\mathcal{D}_{i,\text{\tiny DP}}&=\bigcap_{j\in\mathcal{J}_i}\mathcal{H}_{i,j,\text{\tiny DP}}\\
&=\left\{\x\mid \x\in\mathbb{R}^2, \aaa_{i,j}^T \x\geq b_{i,j}+c_{i,j}, \forall j\in\mathcal{J}_i\right\},\\
&=\left\{\x\mid\x\in\mathbb{R}^2, \A_i \x\succeq \bbb_i+\ccc_{i}\right\},
\end{aligned}
\end{equation}
where $\ccc_{i}\in\mathbb{R}^{M_i}_+$ is the vector containing $\frac{d_{i,j}\|\aaa_{i,j}\|_2}{2}$ for all $j\in\mathcal{J}_i$. Similar to $\mathcal{D}_{i,\text{\tiny ML}}$, the region $\mathcal{D}_{i,\text{\tiny DP}}$ is given by the intersection of a number of closed halfspaces and thus is a polyhedron. Furthermore, the bounding hyperplanes of $\mathcal{D}_{i,\text{\tiny DP}}$ are parallel to their corresponding Voronoi edges, i.e., they have the same inward normals $\aaa_{i,j}, j\in\mathcal{J}_i$. It is straightforward to show that the following properties hold for DPCIRs:
\begin{property}\label{pro:0}
For all $\x_i\in\chi$ and any $\x\in\mathcal{D}_{i,\text{\tiny\normalfont DP}}$, we have
\begin{itemize}
\item[i.]$\mathcal{D}_{i,\textrm{\tiny\normalfont DP}}\subseteq\mathcal{D}_{i,\text{\tiny\normalfont ML}}$.
\item[ii.]$\|\x-\yyy\|_2\geq\|\x_i-\x_j\|_2=d_{i,j}$, $\forall \x_j\in\chi$ and any $\yyy\in\mathcal{D}_{j,\text{\tiny\normalfont DP}}$.
\end{itemize}
\end{property}
\noindent A special case of Property \ref{pro:0}-ii for $\yyy=\x_j$ becomes
\begin{equation}\label{eq:pro0cons}
\|\x-\x_j\|_2\geq \|\x_i-\x_j\|_2, \forall \x_j\in\chi,
\end{equation}
where \eqref{eq:pro0cons} holds with equality only when $\x=\x_i$.

The convex hull $\conv\chi$, i.e., the smallest convex set containing $\chi$, can be easily derived from the constellation set $\chi$. The set of points belonging to the boundary of $\conv\chi$ is denoted by $\bd\chi$, and the set of interior points of $\conv\chi$, i.e., $\conv\chi\backslash\bd\chi$, is denoted by $\interior\chi$. A typical illustration of these sets for the optimized 8-ary constellation \cite{cons_opt} is shown in Fig. \ref{fig:1} (a). It follows from \eqref{eq:dpcirlinineq} that if $\mathcal{D}_{i,\text{\tiny ML}}$ is bounded, then $\mathcal{D}_{i,\text{\tiny DP}}=\x_i$, which means that all the inequalities are satisfied with equality. On the other hand, for an unbounded $\mathcal{D}_{i,\text{\tiny ML}}$, the associated $\mathcal{D}_{i,\text{\tiny DP}}$ is an unbounded polyhedron (more specifically, a polyhedral angle as depicted in Fig. \ref{fig:1} (a)) which is uniquely characterized using the two following lemmas.

\begin{figure}
	\begin{subfigmatrix}{3}
		\subfigure[]{\includegraphics[width=.3\columnwidth]{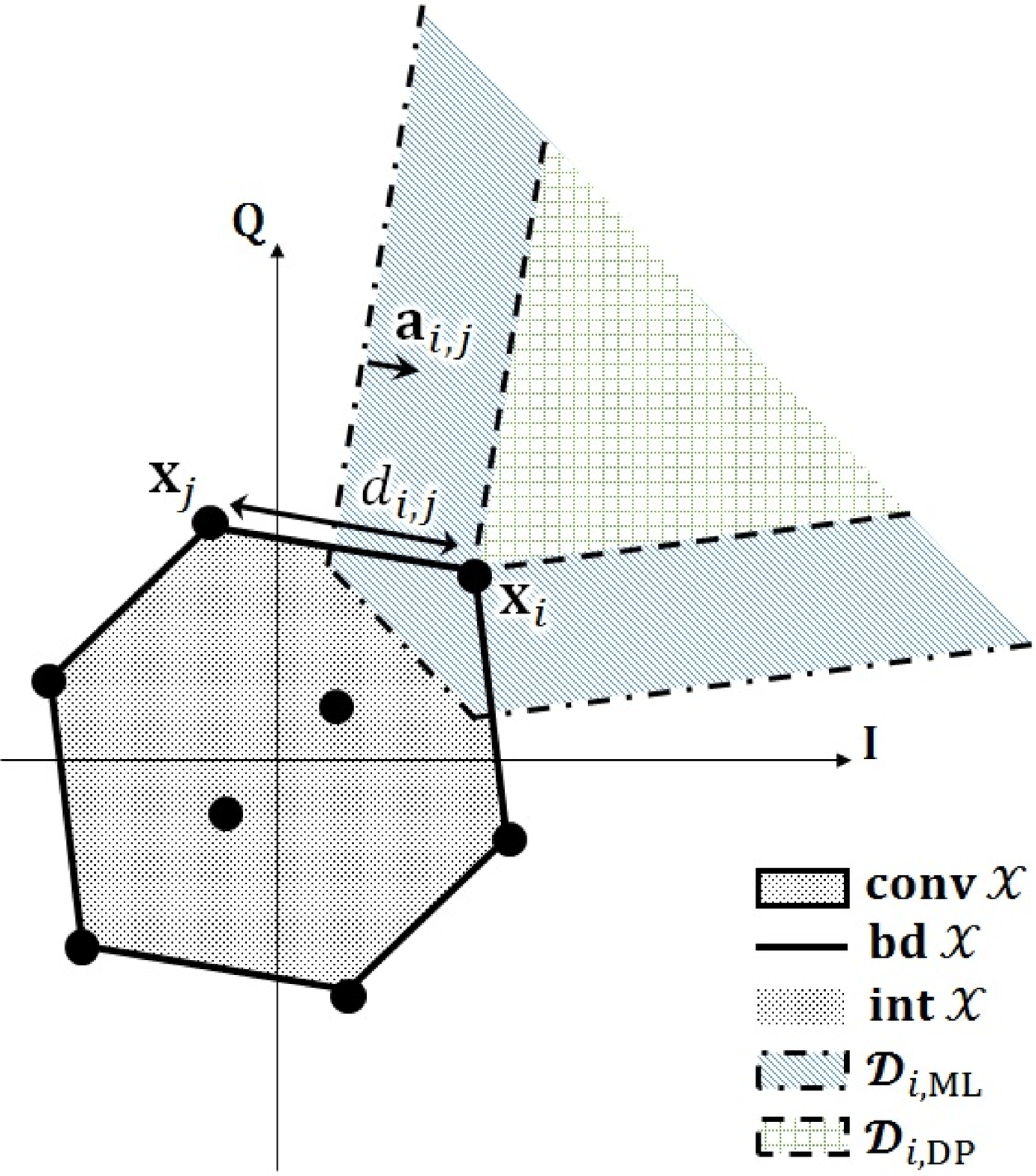}}
		\subfigure[]{\includegraphics[width=.3\columnwidth]{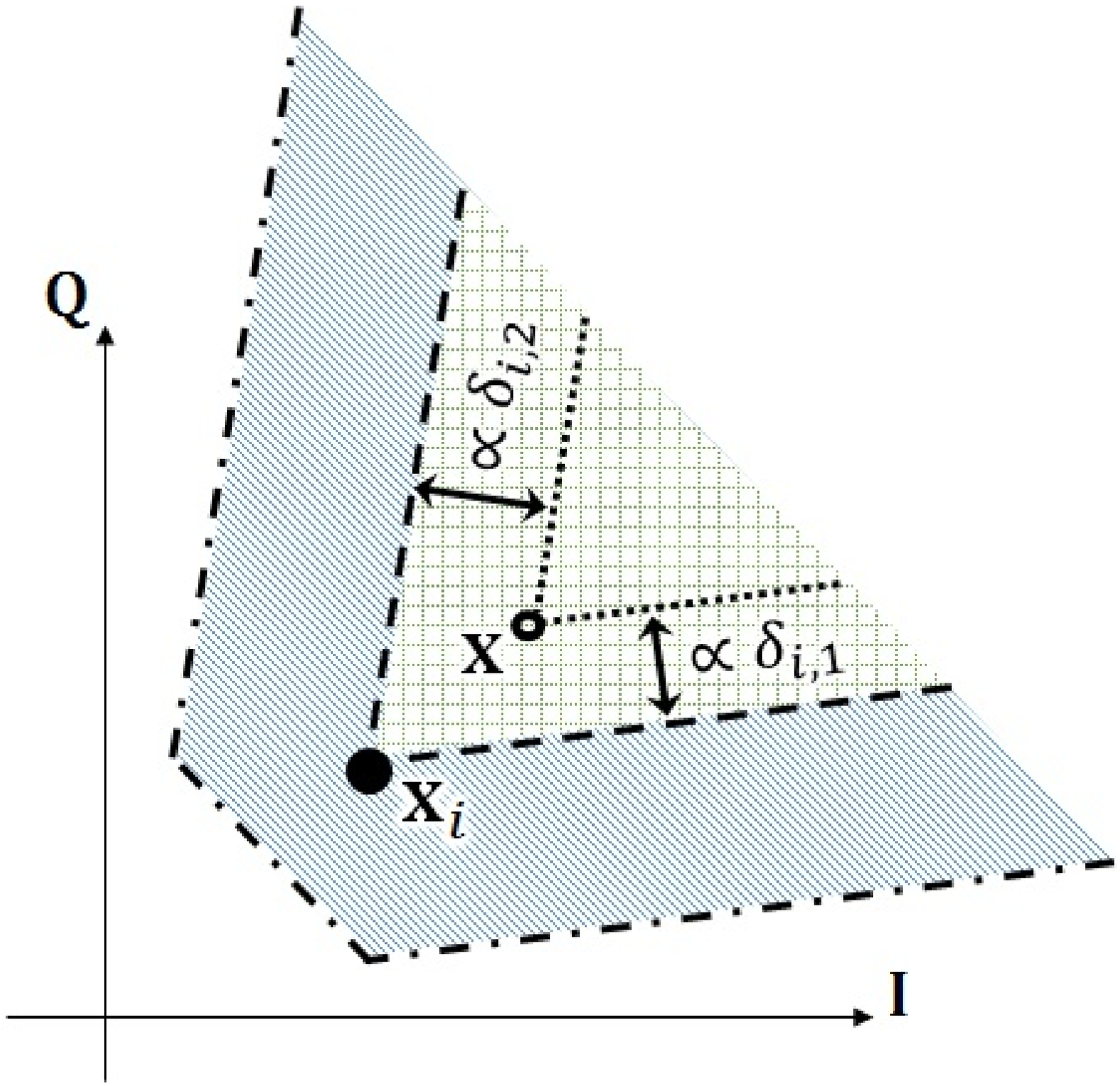}}
	\end{subfigmatrix}
	\caption{The optimized 8-ary constellation. (a) A boundary points $\x_i$ with unbounded Voronoi region; $\mathcal{D}_{i,\text{\tiny DP}}$ is a polyhedral angle with two infinite edges starting from $\x_i$. (b) Any point $\x\in\mathcal{D}_{i,\text{\tiny DP}}$ can be represented by \eqref{eq:del} if one displaces the two infinite bounding hyperplanes, each of which by an orthogonal distance proportional to $\delta_{i,1}$ or $\delta_{i,2}$.}
	\label{fig:1}
\end{figure}

\begin{lemma}\label{lem:1}
A point $\x_i\in\chi$ lies on the boundary of (or is a vertex of) $\conv\chi$ iff its Voronoi region $\mathcal{D}_{i,\text{\tiny\normalfont ML}}$ is unbounded \cite[Lemma 2.2]{vor_diag}.
\end{lemma}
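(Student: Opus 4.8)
The plan is to decide the (un)boundedness of the Voronoi polyhedron $\mathcal{D}_{i,\text{\tiny ML}}$ from its recession cone, and then to connect the triviality of that cone to the position of $\x_i$ relative to $\conv\chi$ by a separating-hyperplane argument. It is cleanest to prove the contrapositive equivalence, namely that $\x_i\in\interior\chi$ if and only if $\mathcal{D}_{i,\text{\tiny ML}}$ is bounded, since the two complementary cases (interior versus boundary) partition $\chi$.

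First I would record the recession cone. By \eqref{eq:capc}, $\mathcal{D}_{i,\text{\tiny ML}}=\{\x\mid\A_i\x\succeq\bbb_i\}$ is a polyhedron that is nonempty (it contains $\x_i$) and whose constraint normals are the rows $\aaa_{i,j}^T=(\x_i-\x_j)^T$. A nonempty polyhedron is bounded if and only if its recession cone reduces to $\{\mathbf{0}\}$; since the recession cone is an intrinsic attribute of the set, independent of how many (possibly redundant) halfspaces we list, it equals $C_i=\{\vvv\mid(\x_i-\x_j)^T\vvv\geq0,\ \forall j\}=\{\vvv\mid(\x_j-\x_i)^T\vvv\leq0,\ \forall j\}$, i.e.\ the set of directions making a non-acute angle with every difference vector $\x_j-\x_i$.

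Next I would relate $C_i$ to the location of $\x_i$. Translating by $-\x_i$ shows that $\x_i\in\interior\chi$ if and only if $\mathbf{0}\in\interior\conv\{\x_j-\x_i\mid j\}$. By the supporting/separating hyperplane theorem, $\mathbf{0}\notin\interior\conv\{\x_j-\x_i\mid j\}$ exactly when there exists a nonzero $\vvv$ with $\vvv^T(\x_j-\x_i)\leq0$ for all $j$, that is, exactly when $C_i\neq\{\mathbf{0}\}$. Chaining these equivalences yields $\x_i\in\bd\chi\iff\x_i\notin\interior\chi\iff C_i\neq\{\mathbf{0}\}\iff\mathcal{D}_{i,\text{\tiny ML}}$ is unbounded, which is the claim.

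I expect the main obstacle to be the separating-hyperplane step together with the degenerate configuration in which all constellation points are collinear, so that $\conv\chi$ is a segment with empty interior. In that case every $\x_i$ is a boundary point and every Voronoi region is an unbounded slab, which is consistent with the statement; I would either restrict attention to a genuinely two-dimensional (full-dimensional) constellation or dispatch this special case by hand. A secondary point worth stating carefully is that listing only the neighbor constraints in \eqref{eq:capc} does not affect $C_i$, precisely because the recession cone depends on the polyhedron itself and not on its particular inequality description.
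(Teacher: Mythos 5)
Your proof is correct, but note that the paper itself offers no argument for this lemma: it is imported verbatim from the literature (the citation to Lemma 2.2 of the Aurenhammer--Klein chapter on Voronoi diagrams), and unlike Lemmas \ref{lem:2}, \ref{lem:3} and Theorem \ref{thm:1} it receives no appendix proof. So what you have produced is a self-contained substitute rather than a parallel to anything in the text. Your route --- reading off unboundedness of the nonempty polyhedron $\mathcal{D}_{i,\text{\tiny ML}}$ from nontriviality of its recession cone $C_i=\{\vvv\mid(\x_i-\x_j)^T\vvv\geq 0,\ \forall j\}$, then identifying $C_i\neq\{\mathbf{0}\}$ with the existence of a supporting hyperplane to $\conv\chi$ at $\x_i$ --- is in essence a modernized version of the classical argument behind the cited result: a ray $\{\x_i+t\vvv\mid t\geq0\}$ stays inside the Voronoi cell of $\x_i$ precisely when $\vvv^T(\x_j-\x_i)\leq 0$ for all $j$ (expand $\|t\vvv+(\x_i-\x_j)\|^2\geq\|t\vvv\|^2$ and let $t\to\infty$), which is exactly your recession-cone condition. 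Your two points of care are well taken and correctly resolved: the recession cone is intrinsic to the set, so the neighbor-only description \eqref{eq:cap} and the all-points description of the Voronoi region yield the same cone, which is what licenses running the supporting-hyperplane step over all $\x_j\in\chi$; and the collinear configuration needs no special dispensation, since there both sides of the equivalence hold for every point (a vector normal to the common line lies in every $C_i$, while $\interior\conv\chi$ is empty). The trade-off is the expected one: your argument makes the paper self-contained at the cost of some routine convexity bookkeeping, whereas the paper's citation buys brevity by deferring exactly those checks to the handbook.
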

\begin{lemma}\label{lem:2}
For every $\x_i\in\chi$ with unbounded $\mathcal{D}_{i,\text{\tiny ML}}$, $\mathcal{D}_{i,\text{\tiny DP}}$ is a polyhedral angle with a vertex at $\x_i$, and each of its edges is perpendicular to one of the two line segments connecting $\x_i$ to its two neighboring points on $\bd\chi$.
\end{lemma}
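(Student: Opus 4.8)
The plan is to pin down the apex first and the two edges second, exploiting the exact value of the distance-preserving offset. My first step is the observation that makes everything fall into place: since the Voronoi normal is $\aaa_{i,j}=\x_i-\x_j$ we have $\|\aaa_{i,j}\|_2=d_{i,j}$, so the offset is $c_{i,j}=\tfrac{d_{i,j}\|\aaa_{i,j}\|_2}{2}=\tfrac{d_{i,j}^2}{2}$. Substituting $\x=\x_i$ into $\aaa_{i,j}^T\x-(b_{i,j}+c_{i,j})$ and using $b_{i,j}=\aaa_{i,j}^T(\x_i+\x_j)/2$ collapses the expression to $\tfrac12\|\x_i-\x_j\|_2^2-\tfrac12 d_{i,j}^2=0$. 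Hence $\A_i\x_i=\bbb_i+\ccc_i$, i.e. $\x_i$ lies on the bounding hyperplane of every halfspace in \eqref{eq:dpcirlinineq} (consistently with \eqref{eq:pro0cons}, which holds with equality exactly at $\x=\x_i$).

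Given this, I would rewrite the region as $\mathcal{D}_{i,\text{\tiny DP}}=\{\x:\A_i(\x-\x_i)\succeq\mathbf{0}\}=\x_i+\mathcal{R}_i$, where $\mathcal{R}_i=\{\vvv:\aaa_{i,j}^T\vvv\geq 0,\ \forall j\in\mathcal{J}_i\}$ is precisely the recession cone of the Voronoi region $\mathcal{D}_{i,\text{\tiny ML}}$. Thus $\mathcal{D}_{i,\text{\tiny DP}}$ is a polyhedral cone translated to $\x_i$, which immediately makes $\x_i$ its vertex; what remains is to show $\mathcal{R}_i$ is a pointed two-dimensional cone and to identify its two extreme rays. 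Since $\aaa_{i,j}=\x_i-\x_j$, the cone is $\mathcal{R}_i=\{\vvv:(\x_j-\x_i)^T\vvv\leq 0,\ \forall j\}$, the set of directions non-acute with every neighbor direction $\x_j-\x_i$. By Lemma \ref{lem:1}, the hypothesis that $\mathcal{D}_{i,\text{\tiny ML}}$ is unbounded is equivalent to $\x_i\in\bd\chi$; hence $\x_i$ is a vertex of $\conv\chi$ and all direction vectors $\{\x_j-\x_i\}$ lie in a common angular sector of opening strictly less than $\pi$, whose two extreme directions point to the two convex-hull neighbors $\x_{j_1},\x_{j_2}$ of $\x_i$ on $\bd\chi$.

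I would then argue that exactly the constraints indexed by $j_1$ and $j_2$ survive as facets of $\mathcal{R}_i$. Because every $\x_j-\x_i$ lies angularly between $\x_{j_1}-\x_i$ and $\x_{j_2}-\x_i$, a direction $\vvv$ is non-acute with all neighbor directions as soon as it is non-acute with the two extreme ones; the intermediate constraints are redundant, and $\mathcal{R}_i$ is the pointed wedge cut out by the two halfplanes $(\x_{j_1}-\x_i)^T\vvv\leq 0$ and $(\x_{j_2}-\x_i)^T\vvv\leq 0$. Its bounding edges lie on the lines $\vvv\perp(\x_i-\x_{j_1})$ and $\vvv\perp(\x_i-\x_{j_2})$, so after translating by $\x_i$ the two edges of $\mathcal{D}_{i,\text{\tiny DP}}$ are perpendicular to the segments $\x_i\x_{j_1}$ and $\x_i\x_{j_2}$, which is the claim.

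The main obstacle I anticipate is making this extremality/redundancy step fully rigorous rather than pictorial: one must confirm that $j_1,j_2\in\mathcal{J}_i$ (the convex-hull neighbors are genuine Voronoi neighbors, so their constraints really appear in \eqref{eq:dpcirlinineq}), that the sector opening is strictly below $\pi$ so that $\mathcal{R}_i$ is pointed and two-dimensional (ruling out the degenerate half-plane or single-ray cases), and that no other neighbor direction coincides angularly with $\x_{j_1}-\x_i$ or $\x_{j_2}-\x_i$. Each of these reduces to the elementary planar geometry of a vertex of a convex polygon, but they are exactly the places where the convex-position hypothesis, via Lemma \ref{lem:1}, is indispensable.
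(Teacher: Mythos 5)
Your route is genuinely different from the paper's proof, which cites two structural theorems on unbounded polyhedra (the outward-normal criterion for unboundedness, and determination of an unbounded polyhedron by its vertices and infinite edge directions) and then identifies the edges via the hull neighbors. Your first two steps are correct and arguably cleaner: the explicit computation showing $\A_i\x_i=\bbb_i+\ccc_i$ (which the paper merely asserts as ``straightforward to check''), and the identification $\mathcal{D}_{i,\text{\tiny DP}}=\x_i+\mathcal{R}_i$ with $\mathcal{R}_i$ the recession cone of $\mathcal{D}_{i,\text{\tiny ML}}$, make the apex claim immediate without any citation.

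However, your extremality/redundancy step has a genuine gap, and it sits exactly where you suspected. Lemma \ref{lem:1} gives only $\x_i\in\bd\chi$; it does \emph{not} give that $\x_i$ is a vertex (extreme point) of $\conv\chi$, and your ``hence $\x_i$ is a vertex'' is false in general. A boundary constellation point may lie in the relative interior of a hull edge, collinear with its two neighbors on $\bd\chi$; then $\x_{j_1}-\x_i$ and $\x_{j_2}-\x_i$ are antipodal, the angular sector has opening exactly $\pi$, the cone cut out by the two extreme constraints is a full line (neither pointed nor two-dimensional), and the intermediate constraints are \emph{not} redundant. This is not a pathological case: for 16-QAM (a constellation the paper explicitly treats), take $\x_i=(1,3)$ with hull neighbors $(-1,3)$ and $(3,3)$. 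Your two surviving constraints give the entire vertical line through $(1,3)$, whereas $\mathcal{D}_{i,\text{\tiny DP}}$ is only the upward ray; the constraint coming from the neighbor $(1,1)$ is essential to cut the line down to a half-line. So the verification you defer to ``elementary planar geometry of a vertex of a convex polygon'' cannot go through as stated---vertexhood is precisely what is missing. The repair is a case split: when $\x_i$ is an extreme point of $\conv\chi$, your argument works verbatim; when $\x_i$ is a non-extreme boundary point, the recession cone degenerates to the single ray orthogonal to the hull edge, which is still ``perpendicular to both segments'' and is exactly the degenerate half-line reading that the paper adopts in Remark 1. Note that the paper's own proof does not trip here because it never claims pointedness: it extracts the apex from the common intersection point and the edge directions from the hull neighbors, at the cost of being loose about the two edges coinciding.
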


\begin{proof}
See Appendix \ref{app:lem2}.
\end{proof}

Lemma \ref{lem:2} implicitly states that neither changing the location of any constellation point $\x_j\in\interior\chi$ nor adding a new constellation point on $\bd\chi$ does not affect $\mathcal{D}_{i,\text{\tiny DP}}$ for any $\x_i\in\bd\chi$, as they both keep the direction of $\aaa_{i,j}$ unchanged for all $\x_j\in\mathcal{S}_i\cap\bd\chi$. 

Next, we prove that the norm of any point in a DPCIR is always greater than or equal to the norm of the corresponding vertex if and only if the convex hull of the constellation includes the origin. It should be noted that this is a rather light condition, as all well-known constellations in the literature with $M\geq4$ have at least one point in each quadrant and therefore their convex hull contains the origin. 

\begin{lemma}\label{lem:3}
For any constellation point $\x_i\in\chi$, we have $\|\x\|\geq\|\x_i\|,\forall\x\in\mathcal{D}_{i,\text{\tiny\normalfont DP}}$ iff $\conv\chi$ contains the origin. Equality is achieved only when $\x=\x_i$.
\end{lemma}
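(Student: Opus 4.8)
The plan is to reduce the norm inequality to a statement about a single linear functional evaluated over the recession cone of $\D_{i,\text{\tiny DP}}$, and then to identify that cone geometrically. First I would write, for any $\x\in\D_{i,\text{\tiny DP}}$,
\begin{equation*}
\|\x\|^2-\|\x_i\|^2=\|\x-\x_i\|^2+2\,\x_i^T(\x-\x_i).
\end{equation*}
Since $\D_{i,\text{\tiny DP}}$ is a convex polyhedral angle with apex $\x_i$ (Lemma \ref{lem:2}), it contains the whole ray $\x_i+t\vvv$, $t\ge0$, for every recession direction $\vvv$ of its cone, which I denote $C_i$. Hence I would argue that $\|\x\|\ge\|\x_i\|$ holds on $\D_{i,\text{\tiny DP}}$ if and only if $\x_i^T\vvv\ge0$ for all $\vvv\in C_i$: the ``if'' part is immediate from the identity (both terms are nonnegative), while for the converse, if some $\vvv_0\in C_i$ had $\x_i^T\vvv_0<0$, then evaluating the identity along $\x_i+t\vvv_0$ for small $t>0$ makes the linear term dominate and yields $\|\x\|<\|\x_i\|$. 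The same identity also delivers the equality clause, since when the linear term is nonnegative, $\|\x\|=\|\x_i\|$ forces $\|\x-\x_i\|=0$.

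The key geometric step is to recognize $C_i$ as the normal cone of $\conv\chi$ at $\x_i$. For $\x_i\in\interior\chi$ the region $\D_{i,\text{\tiny DP}}$ collapses to $\{\x_i\}$ and the claim is trivial, so I focus on $\x_i\in\bd\chi$. By Lemma \ref{lem:2}, the two edges of $C_i$ are perpendicular to the segments joining $\x_i$ to its two boundary neighbors, i.e.\ perpendicular to the two polygon edges of $\conv\chi$ meeting at $\x_i$; these are precisely the generators of the normal cone $N(\x_i)=\{\vvv:\vvv^T(\yyy-\x_i)\le0,\ \forall\yyy\in\conv\chi\}$, and checking orientation (the angle opens away from the neighbors) shows $C_i=N(\x_i)$.

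With this identification both directions follow cleanly. For the ``if'' direction, assume $\mathbf{0}\in\conv\chi$; taking $\yyy=\mathbf{0}$ in the defining inequality of $N(\x_i)$ gives $\vvv^T(-\x_i)\le0$, i.e.\ $\x_i^T\vvv\ge0$ for every $\vvv\in C_i$, which by the reduction proves $\|\x\|\ge\|\x_i\|$ (this argument is in fact valid for each $i$ separately). For the converse I would argue by contraposition: if $\mathbf{0}\notin\conv\chi$, let $\mathbf{p}\neq\mathbf{0}$ be the Euclidean projection of the origin onto $\conv\chi$; the optimality condition $(\mathbf{0}-\mathbf{p})^T(\yyy-\mathbf{p})\le0$ shows that $-\mathbf{p}$ is an outward normal at $\mathbf{p}$. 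If $\mathbf{p}$ is a vertex $\x_{i^*}$ then $-\mathbf{p}\in C_{i^*}$ and $\x_{i^*}^T(-\mathbf{p})=-\|\mathbf{p}\|^2<0$; if $\mathbf{p}$ lies in the relative interior of an edge with endpoint $\x_{i^*}$, then $-\mathbf{p}$ still lies in $N(\x_{i^*})=C_{i^*}$ and, using $\mathbf{p}^T\x_{i^*}\ge\|\mathbf{p}\|^2>0$ from the same projection inequality, again $\x_{i^*}^T(-\mathbf{p})<0$. Either way the norm property is violated at $\x_{i^*}$, so it cannot hold for all points.

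The main obstacle I anticipate is the identification $C_i=N(\x_i)$: it requires invoking Lemma \ref{lem:2} to pin down the edge directions and then carefully verifying the orientation so that the recession cone coincides with the \emph{outward} normal cone rather than its negative. The projection-based case analysis in the converse is the other delicate point, since one must ensure the exhibited normal direction $-\mathbf{p}$ genuinely belongs to the normal cone of an actual constellation vertex, which is why I distinguish the vertex and edge-interior cases for $\mathbf{p}$.
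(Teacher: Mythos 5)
Your proof is correct, but it takes a genuinely different route from the paper's. The paper never works with recession or normal cones: for sufficiency it adjoins the origin to the constellation, $\tilde{\chi}=\chi\cup\{\mathbf{0}\}$, notes (via the remark after Lemma \ref{lem:2}) that this augmentation leaves the DPCIRs of boundary points unchanged, and reads off $\|\x\|\geq\|\x_i\|$ from the distance-preserving inequality \eqref{eq:pro0cons} with $\x_j=\mathbf{0}$; for necessity it again adjoins the origin, observes that $\mathbf{0}$ then contributes a bounding halfspace to the DPCIR of some $\x_l$, concludes that $\{\x\mid \x_l^T\x=\x_l^T\x_l\}$ fails to be a supporting hyperplane of $\mathcal{D}_{l,\text{\tiny DP}}$ at $\x_l$, and invokes the minimum-norm characterization (Property \ref{pro:1}) to exhibit a point of smaller norm. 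Your argument replaces both steps with first-principles convex analysis: the polarization identity reduces the claim to the sign of $\x_i^T\vvv$ over the recession cone $C_i$, the identification $C_i=N(\x_i)$ turns $\mathbf{0}\in\conv\chi$ into exactly that sign condition, and the Euclidean projection of the origin supplies the violating direction when $\mathbf{0}\notin\conv\chi$. What your route buys: it is self-contained (no augmentation trick, no appeal to Property \ref{pro:1}), and the identity yields the quantitative strengthening $\|\x\|^2\geq\|\x_i\|^2+\|\x-\x_i\|^2$ whenever the origin lies in the hull, from which the equality clause is immediate. What the paper's route buys: it recycles machinery already established (Property \ref{pro:0}, Lemma \ref{lem:2}), so the write-up is shorter and avoids the two delicate points you flag. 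Incidentally, both of your anticipated obstacles can be discharged more cheaply than you propose: the recession cone can be read directly from the halfspace description \eqref{eq:dpcirlinineq}, namely $C_i=\{\vvv\mid \A_i\vvv\succeq\mathbf{0}\}=\{\vvv\mid(\x_i-\x_j)^T\vvv\geq0,\ \forall\x_j\in\chi\}$, where the second equality holds because redundant inequalities do not alter the recession cone of a nonempty polyhedron; the right-hand side is precisely $N(\x_i)$ since a linear functional is maximized over $\conv\chi$ at a point of $\chi$. This removes any orientation check, and it also covers uniformly the degenerate case where $\x_i$ lies in the relative interior of a hull edge (the QAM-type case of Remark 1), for which your phrase ``the two polygon edges of $\conv\chi$ meeting at $\x_i$'' is not literally accurate but both cones collapse to the same single outward ray.
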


\begin{proof}
	See Appendix \ref{app:lem3}.
\end{proof}

To proceed, it is more convenient to express the linear inequalities of \eqref{eq:dpcirlinineq} by an equivalent set of linear equations as
\begin{equation}\label{eq:dpcirlineq}
\mathcal{D}_{i,\text{\tiny DP}}=\Big\{\x\mid\x\in\mathbb{R}^2, \A_i \x=\bbb_i+\ccc_{i}+\Deee_i, \Deee_i\in\mathbb{R}^{M_i}_+ \Big\}.
\end{equation}
The linear equations in \eqref{eq:dpcirlineq} indicate that any $\x\in\mathcal{D}_{i,\text{\tiny DP}}$ can be realized as the intersection point of $M_i$ hyperplanes, each of which is parallel to a boundary hyperplane of $\mathcal{D}_{i,\text{\tiny DP}}$ but has a different offset due to the term $\Deee_i$.\\

\noindent {\bf{Remark 1.}} It is easy to verify that a hyperplane in a set of hyperplanes describing the boundaries of a polyhedron is redundant if the corresponding polyhedron remains unchanged by removing the hyperplane \cite[p. 9]{comb_opt}. Therefore, we can remove from \eqref{eq:dpcirlineq} the equalities that come from a redundant hyperplane. Using this and based on Lemma \ref{lem:2}, for any $\x_i\in\bd\chi$, the associated region $\mathcal{D}_{i,\text{\tiny DP}}$ is spanned by at most two non-negative parameters $\delta_{i,1}$ and $\delta_{i,2}$ corresponding to the two infinite Voronoi edges. Therefore, any point $\x \in \mathcal{D}_{i,\text{\tiny DP}}$ can be specified by
\begin{equation}\label{eq:del}
\Deee_i=[\delta_{i,1},\delta_{i,2}]^T\in\mathbb{R}^2_+, \forall \x_i\in\bd\chi,
\end{equation}
which makes $\A_i$ a $2\times2$ full-rank, and hence invertible, matrix. 

It should be pointed out that this representation covers the special case with the two infinite Voronoi edges being parallel to each other (e.g., QAM constellations). In such case, both $\delta_{i,1}$ and $\delta_{i,2}$ are constrained to be always zero; but the region $\mathcal{D}_{i,\text{\tiny DP}}$, which is a half-line starting from the constellation point $\x_i$, can be spanned by a non-negative parameter indicating the displacement of a virtual hyperplane orthogonal to the two existing infinite Voronoi edges. Thereby, any point $\x\in\mathcal{D}_{i,\text{\tiny DP}}$ is represented by, for example,  $\Deee_i=[\delta_{i,1},0]^T\in\mathbb{R}^2_+$, which preserves the invertibility of $\A_i$.

It is important to notice that our definitions are presented for two-dimensional constellations. For pulse amplitude modulation (PAM) schemes, where the constellation is one-dimensional, one may define the same concept by embedding PAM in $\mathbb{R}^2$. 
However, in this case \eqref{eq:del} fails to span the entire region $\mathcal{D}_{i,\text{\tiny DP}}$ since a single hyperplane solely contributes to $\mathcal{D}_{i,\text{\tiny DP}}$ for each $\x_i\in\chi$, as illustrated in Fig. \ref{fig:3} for 4-PAM. Any point $\x$ belonging to the region $\mathcal{D}_{i,\text{\tiny DP}}$ associated with either of the two outer symbols can be described as, for example, $\Deee_i=[\delta_{i,1},\delta_{i,2}]^T$, where $\delta_{i,1}\in\mathbb{R}_+$ corresponds to the single hyperplane and $\delta_{i,2}\in\mathbb{R}$ displaces a virtual hyperplane granting the second basis to span $\mathcal{D}_{i,\text{\tiny DP}}$. For those constellation points other than the two outer symbols, any $\x\in\mathcal{D}_{i,\text{\tiny DP}}$ is represented by $\Deee_i=[0,\delta_{i,2}]^T$, where $\delta_{i,2}\in\mathbb{R}$ grants the only basis spanning $\mathcal{D}_{i,\text{\tiny DP}}$ (see Fig. \ref{fig:3}). 

\begin{figure}
	\centering
	\includegraphics[width=0.6\columnwidth]{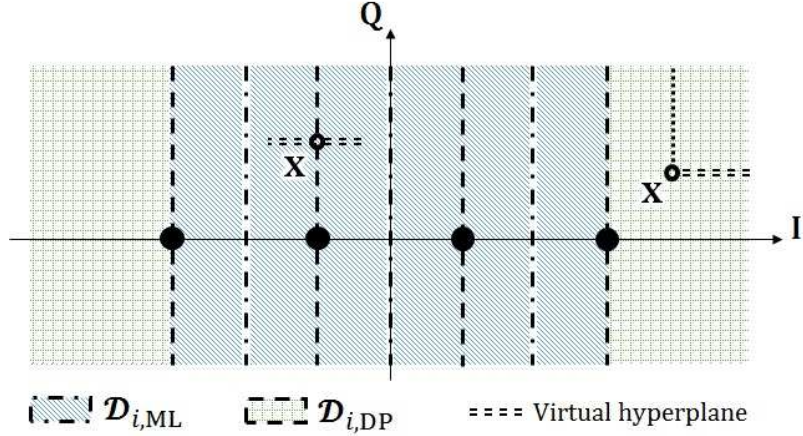}
	\caption{4-PAM constellation in $\mathbb{R}^2$; a virtual hyperplane is needed to specify any point $\x\in\mathcal{D}_{i,\text{\tiny DP}}$.}
	\label{fig:3}
\end{figure}

Finally, we state the following theorem which will help us to formulate the max-min SINR and power minimization problems for SLP in the next section.
\begin{theorem}\label{thm:1}
For any constellation point $\x_i\in\bd\chi$ with $\mathcal{D}_{i,\text{\tiny\normalfont DP}}$ as expressed in \eqref{eq:dpcirlineq}, function $f(\x)=\|\x\|$ over its domain $\mathcal{D}_{i,\text{\tiny\normalfont DP}}$ is a monotonic strictly increasing function of each element of $\Deee_i$ iff $\conv\chi$ contains the origin.
\end{theorem}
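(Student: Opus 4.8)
The plan is to reduce the claim to a statement about the sign of the directional derivative of the norm along each edge of the polyhedral angle, and then to prove both implications through Lemma~\ref{lem:3}. First I would use the representation \eqref{eq:dpcirlineq} together with Remark~1: since $\A_i$ is $2\times2$ and invertible, every $\x\in\mathcal{D}_{i,\text{\tiny DP}}$ can be written as $\x=\A_i^{-1}(\bbb_i+\ccc_i)+\A_i^{-1}\Deee_i=\x_i+\delta_{i,1}\vvv_1+\delta_{i,2}\vvv_2$, where $\vvv_1,\vvv_2$ are the columns of $\A_i^{-1}$. Setting $\Deee_i=\mathbf{0}$ recovers the vertex $\x_i$, and by Lemma~\ref{lem:2} the vectors $\vvv_1,\vvv_2$ are precisely the directions of the two infinite edges. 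Hence $\x$ is an affine function of $\Deee_i$, and the monotonicity of $f(\x)=\|\x\|$ in each $\delta_{i,m}$ reduces to studying the one-dimensional restrictions $\delta\mapsto\|\x+\delta\vvv_m\|$.

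Because $f\geq0$, it is equivalent and more convenient to work with $f(\x)^2=\|\x\|^2$, whose partial derivatives are $\partial\|\x\|^2/\partial\delta_{i,m}=2\,\x^T\vvv_m$. Along the direction $\vvv_m$ the squared norm is the upward parabola $\|\x\|^2+2\delta\,\x^T\vvv_m+\delta^2\|\vvv_m\|^2$ with $\|\vvv_m\|^2>0$, and such a parabola is strictly increasing on $[0,\infty)$ exactly when its slope at the origin is non-negative. Thus, after noting that $\|\x\|>0$ on $\mathcal{D}_{i,\text{\tiny DP}}$ by Lemma~\ref{lem:3}, the whole theorem reduces to the equivalence: $\x^T\vvv_m\geq0$ (with the relevant strictness) for each $m$ if and only if $\conv\chi$ contains the origin.

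For the ``only if'' direction I would argue directly from the definition and Lemma~\ref{lem:3}. If $f$ is monotonically increasing in each element of $\Deee_i$, then increasing the coordinates from $\mathbf{0}$ one at a time gives $f(\Deee_i)\geq f(\mathbf{0})=\|\x_i\|$ for every $\Deee_i\succeq\mathbf{0}$, i.e. $\|\x\|\geq\|\x_i\|$ for all $\x\in\mathcal{D}_{i,\text{\tiny DP}}$; by Lemma~\ref{lem:3} this forces the origin to lie in $\conv\chi$. For the ``if'' direction I would again invoke Lemma~\ref{lem:3}: when the origin lies in $\conv\chi$, each ray $\x_i+\delta\vvv_m$, $\delta\geq0$, stays inside $\mathcal{D}_{i,\text{\tiny DP}}$, so $\|\x_i+\delta\vvv_m\|\geq\|\x_i\|$ for all $\delta\geq0$; expanding the square and letting $\delta\to0^{+}$ yields $\x_i^T\vvv_m\geq0$, which combined with $\|\vvv_m\|^2>0$ gives strict increase along each edge.

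The main obstacle is upgrading this edge-wise monotonicity to monotonicity over the entire region: at a generic point $\x=\x_i+\delta_{i,1}\vvv_1+\delta_{i,2}\vvv_2$ the sign of $\x^T\vvv_m$ also involves the cross term $\delta_{i,n}\,\vvv_n^T\vvv_m$ with $n\neq m$, which is not controlled by Lemma~\ref{lem:3} alone. This is where I expect the geometric content of Lemma~\ref{lem:2}---that $\vvv_m$ is orthogonal to the segment joining $\x_i$ to its neighbor on $\bd\chi$---to be essential, since it fixes the orientation of the edges relative to $\x_i$ and the origin. I would therefore devote most of the effort to translating ``origin in $\conv\chi$'' into control of $\x^T\vvv_m$ across the whole polyhedral angle, and would treat the strictness of the increase together with the degenerate parallel-edge case of \eqref{eq:del} (where one parameter is forced to zero) as separate, routine checks.
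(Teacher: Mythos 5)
Your necessity argument is sound, and is actually lighter than the paper's own: monotonicity from the vertex gives $\|\x\|\geq\|\x_i\|$ on every $\mathcal{D}_{i,\text{\tiny DP}}$ (the interior points being trivial), and the ``only if'' part of Lemma~\ref{lem:3} then forces $\mathbf{0}\in\conv\chi$, whereas the paper goes through an auxiliary result (Lemma~\ref{lem:5}) to construct an explicit violating pair. The gap is in the sufficiency direction, and it is exactly the one you flag at the end: what you actually prove is $\x_i^T\vvv_m\geq0$, i.e., strict increase of $\|\x\|$ along the two edges emanating from the vertex $\x_i$ (where the other parameter is zero). The theorem requires partial monotonicity at \emph{every} point $\x=\x_i+\delta_{i,1}\vvv_1+\delta_{i,2}\vvv_2$, which in your formulation means $\x^T\vvv_m\geq0$ over the whole polyhedral angle, and the cross term $\delta_{i,n}\vvv_n^T\vvv_m$ is genuinely uncontrolled by your argument: $\vvv_n^T\vvv_m<0$ whenever the interior angle of $\conv\chi$ at $\x_i$ is smaller than $\pi/2$ (e.g., 3-PSK, or any constellation whose hull has a sharp corner), since the cone angle between the edge directions is $\pi$ minus that interior angle. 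Deferring this to ``most of the effort'' leaves the central claim unproven, and the orthogonality statement of Lemma~\ref{lem:2} by itself will not close it, because the difficulty is not the orientation of the edges but the behavior of the norm away from the vertex.

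The idea that closes the gap in the paper is a re-rooting (constellation-augmentation) trick, which you could graft directly onto your framework. Given $\Deee_{i,1}\prec\Deee_{i,2}$ with corresponding points $\yyy_1,\yyy_2\in\mathcal{D}_{i,\text{\tiny DP}}$, consider the augmented constellation $\tilde{\chi}=\chi\cup\{\yyy_1\}$. Then $\conv\chi\subseteq\conv\tilde{\chi}$, so $\mathbf{0}\in\conv\tilde{\chi}$, and the set $\big\{\x\mid \A_i\x=\bbb_i+\ccc_i+\Deee_{i,1}+\Deee_1,\ \Deee_1\in\mathbb{R}^{M_i}_+\big\}$ of points whose parameters dominate $\Deee_{i,1}$ is precisely the DPCIR of $\yyy_1$ regarded as a point of $\tilde{\chi}$ (same normals $\A_i$, shifted offsets). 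Applying Lemma~\ref{lem:3} to $\tilde{\chi}$ at the point $\yyy_1$ gives $\|\yyy_2\|>\|\yyy_1\|$ outright, with no derivative or cross-term computation; equivalently, it is your vertex argument run at the re-rooted vertex $\yyy_1$ instead of $\x_i$, which proves $\x^T\vvv_m\geq0$ at every $\x$ in the region. In short: the missing step is to apply Lemma~\ref{lem:3} not to $\chi$ but to $\chi$ augmented with the generic point of $\mathcal{D}_{i,\text{\tiny DP}}$; without that (or an equivalent device), the sufficiency half of Theorem~\ref{thm:1} does not follow from what you have written.
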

\begin{proof}
See Appendix \ref{app:thm}
\end{proof}

It is worthy to mention that assuming the CIRs to be either union bound (UBCIR) or minimum distance preserving (MDPCIR), as defined in \cite{slp_gen}, a variant of Theorem \ref{thm:1} still holds. In both cases, the norm of any point belonging to these regions is strictly increasing in exactly two elements related to the two infinite Voronoi edges.

\section{Symbol-level Precoding Design Problem}\label{sec:slp}

In this section, by using the properties of DPCIRs proved in the previous section, we formulate the optimization problem of multiuser precoding on a symbol-level basis. In particular, we are interested in two well-known design problems, namely, power minimization and SINR balancing. As discussed in Section \ref{sec:cir}, the DPCIRs can readily be obtained for all generic constellations since they depend only on the Voronoi regions. This enables us to arrange the optimization problems in a general form which is indifferent to the type of constellation. 

Throughout this section, for any user $k=1,...,K$, the symbol $s_k$ corresponds to one of the points $\{\x_i\}_{i=1}^M$ in $\chi$. We denote by $i_k$ the index of the constellation point corresponding to $s_k$, i.e.,
$$\begin{bmatrix}\Re\{s_k\}\\ \Im\{s_k\}\end{bmatrix}=\x_{i_k}, i_k\in\{1,...,M\}.$$
Furthermore, we define the index set $\mathcal{K}\!=\!\{k|\x_{i_k}\!\in\!\bd\chi\}$ referring to those users with a symbol in the boundary of constellation $\chi$. In the following, we rearrange vectors $\uuu$ and $\h_k$ as
\begin{equation}
\nonumber
\tilde{\uuu}=
\begin{bmatrix}
\Re\{\uuu\}\\
\Im\{\uuu\}
\end{bmatrix}\in\mathbb{R}^{2N\times1},
\end{equation}
\begin{equation}
\nonumber
\HHH_k=
\begin{bmatrix}
\Re\{\h_k\} & -\Im\{\h_k\}\\
\Im\{\h_k\} & \Re\{\h_k\}
\end{bmatrix}\in\mathbb{R}^{2\times2N}, k=1,...,K,
\end{equation}
respectively, such that $\HHH_k\tilde{\uuu}$ represents the noise-free received signal at the $k$-th user's receiver. It is easy to check that $\uuu^H\uuu=\tilde{\uuu}^T\tilde{\uuu}$. We further denote by
\begin{equation}
\nonumber
\WWW=
\begin{bmatrix}
\A_{i_1}\HHH_1\\
\vdots \\
\A_{i_K}\HHH_K
\end{bmatrix}\in\mathbb{R}^{L\times2N}, \bbb=[\bbb_{i_1},...,\bbb_{i_K}]^T\in\mathbb{R}^L,
\end{equation}
\begin{equation}
\nonumber
\ccc=[\ccc_{i_1},...,\ccc_{i_K}]^T\in\mathbb{R}^L, \Deee=[\Deee_{i_1},...,\Deee_{i_K}]^T\in\mathbb{R}^L,
\end{equation}
the vectors and matrices collecting the CIR parameters for all $K$ users, where $L=\sum_{k=1}^K M_{i_k}$ in the general case, but is reducible to $L=2|\mathcal{K}|+\sum_{k\notin\mathcal{K}} M_{i_k}$ due to Remark 1.

\subsection{DPCIR-based SLP Power Minimization}

First, we consider a power-restricted scenario in which the downlink transmission should provide each user with its minimum required SINR, while the BS is subject to a total power constraint $P$. For the rationale behind the power minimization problem we kindly refer the reader to\cite{tb_sinr}. In such cases, designing the DPCIR-based precoder involves solving the following feasibility problem
\begin{equation}\label{eq:feas}
\begin{aligned}
\mathrm{find} & \quad \tilde{\uuu}\\
\mathrm{s.t.} & \quad \A_{i_k}\HHH_k\tilde{\uuu}=\sigma_k\sqrt{\gamma_k}\;(\bbb_{i_k}+\ccc_{i_k})+\Deee_{i_k}, k = 1,...,K,\\
& \quad \Deee_{i_k}\succeq\mathbf{0}, k = 1,...,K,\\
& \quad \tilde{\uuu}^T\tilde{\uuu}\leq P,
\end{aligned}
\end{equation}
in order to see whether the given set of SINR thresholds $\{\gamma_k\}_{k=1}^K$ is achievable, i.e., whether the spatial multiplexing to serve multiple users is meaningful. Otherwise, if there is no solution to \eqref{eq:feas}, the system operator decides to relax the other constraints (e.g., decreasing the number of users $K$, or increasing the power budget). Defining $\Sigmaaa=\mathrm{blkdiag}(\sigma_1,...,\sigma_k)\in\mathbb{R}^{L\times L}$ and $\Gammaaa=\mathrm{blkdiag}(\gamma_1,...,\gamma_k)\in\mathbb{R}^{L\times L}$, the feasibility problem \eqref{eq:feas} can be written, in a compact form, as
\begin{equation}\label{eq:feascomp}
\begin{aligned}
\mathrm{find} & \quad \tilde{\uuu}\\
\mathrm{s.t.} & \quad \WWW\tilde{\uuu} = \Sigmaaa\Gammaaa^{1/2}(\bbb+\ccc)+\Deee,\\
& \quad \Deee\succeq\mathbf{0},\\
& \quad \tilde{\uuu}^T\tilde{\uuu}\leq P.
\end{aligned}
\end{equation}
A sufficient condition under which there exists a feasible point for \eqref{eq:feascomp} can be obtained according to the following proposition.
\begin{proposition}\label{prop:1}
	The feasibility problem \eqref{eq:feascomp} is solvable for $L\leq 2N$ if
	\begin{equation}\label{prop1eq}
	\|\WWW^\dagger\Sigmaaa\Gammaaa^{1/2}(\bbb+\ccc)\|_2^2\leq P,
	\end{equation}
	where $\WWW^\dagger=(\WWW^T\WWW)^{-1}\WWW^T$ is the Moore-Penrose (left) pseudoinverse of $\WWW$.
\end{proposition}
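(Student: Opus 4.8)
The plan is to establish solvability by exhibiting a single explicit point satisfying all three constraints of \eqref{eq:feascomp}; since \eqref{prop1eq} is advertised only as a sufficient condition, producing one feasible pair $(\tilde{\uuu},\Deee)$ is enough. The natural candidate is the most conservative one, obtained by setting the slack vector to its boundary value $\Deee=\mathbf{0}$, which trivially respects $\Deee\succeq\mathbf{0}$. Geometrically this pins each user's noise-free received signal $\HHH_k\tilde{\uuu}$ to the vertex of its scaled DPCIR, i.e.\ to the point realized by $\Deee_{i_k}=\mathbf{0}$ in \eqref{eq:dpcirlineq}, leaving only a linear system and the power budget to satisfy.

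With this choice the equality constraint of \eqref{eq:feascomp} collapses to $\WWW\tilde{\uuu}=\Sigmaaa\Gammaaa^{1/2}(\bbb+\ccc)$. Because $L\leq 2N$ and $\WWW$ has full row rank (which, given the invertibility of each block $\A_{i_k}$ from Remark 1 together with generic channels $\{\h_k\}$, holds almost surely), this system is consistent and underdetermined. I would then take the particular solution furnished by the pseudoinverse, $\tilde{\uuu}^\star=\WWW^\dagger\Sigmaaa\Gammaaa^{1/2}(\bbb+\ccc)$, which is precisely the minimum-norm solution of this consistent system and in particular satisfies $\WWW\tilde{\uuu}^\star=\Sigmaaa\Gammaaa^{1/2}(\bbb+\ccc)$, so the equality constraint holds.

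It remains only to verify the power budget for this candidate. By construction one has $(\tilde{\uuu}^\star)^T\tilde{\uuu}^\star=\|\WWW^\dagger\Sigmaaa\Gammaaa^{1/2}(\bbb+\ccc)\|_2^2$, so the constraint $\tilde{\uuu}^T\tilde{\uuu}\leq P$ is met exactly when the hypothesis \eqref{prop1eq} is in force. Consequently $(\tilde{\uuu}^\star,\mathbf{0})$ is feasible for \eqref{eq:feascomp}, and the feasibility problem is solvable, which completes the argument.

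I expect the only point requiring care to be the full-row-rank claim on $\WWW$: it is what simultaneously guarantees consistency of the reduced system and identifies $\WWW^\dagger\Sigmaaa\Gammaaa^{1/2}(\bbb+\ccc)$ as a genuine (minimum-norm) solution whose squared norm equals the left-hand side of \eqref{prop1eq}. I would justify it by combining the invertibility of the per-user blocks with the generic independence of the channel rows when $L\leq 2N$. Everything else is routine linear algebra, and I would remark, to pre-empt misreading, that because $\Deee=\mathbf{0}$ is merely one admissible slack while the full feasible set permits any $\Deee\succeq\mathbf{0}$, condition \eqref{prop1eq} is sufficient but by no means necessary for feasibility.
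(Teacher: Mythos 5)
Your proof is correct and takes essentially the same route as the paper's: fix $\Deee=\mathbf{0}$, take the pseudoinverse (minimum-norm) solution $\tilde{\uuu}=\WWW^\dagger\Sigmaaa\Gammaaa^{1/2}(\bbb+\ccc)$ of the resulting consistent linear system, and observe that hypothesis \eqref{prop1eq} is precisely the power constraint for this candidate, so the pair $(\tilde{\uuu},\mathbf{0})$ is feasible for \eqref{eq:feascomp}. Your explicit justification of the full-row-rank property of $\WWW$ (invertibility of the blocks $\A_{i_k}$ combined with generic channels) is, if anything, slightly more careful than the paper's, which simply notes that $\WWW$ is full-rank almost surely when $L=2N$ and that the system is underdetermined with least-norm solution $\tilde{\uuu}$ when $L<2N$.
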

\begin{proof}
	Let $\Deee=\mathbf{0}$, then \eqref{eq:feascomp} reduces to
	\begin{equation}\label{eq:feascompzf}
	\begin{aligned}
	\mathrm{find} & \quad \tilde{\uuu}\\
	\mathrm{s.t.} & \quad \WWW\tilde{\uuu} = \Sigmaaa\Gammaaa^{1/2}(\bbb+\ccc),\\
	& \quad \tilde{\uuu}^T\tilde{\uuu}\leq P.
	\end{aligned}
	\end{equation}
	Now suppose that $\tilde{\uuu}_\mathrm{o} = \WWW^\dagger\Sigmaaa\Gammaaa^{1/2}(\bbb+\ccc)$ is a solution (not necessarily unique) to the system of linear equations
	\begin{equation}\label{eq:feassle}
	\WWW\tilde{\uuu} = \Sigmaaa\Gammaaa^{1/2}(\bbb+\ccc).
	\end{equation}
	 In fact, $\tilde{\uuu}_\mathrm{o}$ coincides with the solution of the well-known zero-forcing (ZF) beamformer \cite{tb_zf} when all the users are allocated identical SINR thresholds. We argue the existence of $\tilde{\uuu}_\mathrm{o}$ as follows. In case $L=2N$, due to the random channel matrices $\HHH_k, k=1,...,K$, matrix $\WWW$ is full-rank almost surely. This means that the probability of \eqref{eq:feassle} having more than one solution is zero. On the other hand, for $L<2N$, \eqref{eq:feassle} expresses an underdetermined system of linear equations for which $\tilde{\uuu}_\mathrm{o}$ is the least-norm solution. Having $\tilde{\uuu}_\mathrm{o}$ as a solution to \eqref{eq:feassle}, if $\tilde{\uuu}_\mathrm{o}^T\tilde{\uuu}_\mathrm{o}\leq P$, then $\tilde{\uuu}_\mathrm{o}$ is a feasible point for \eqref{eq:feascompzf}; this further ensures the feasibility of \eqref{eq:feascomp} since it is a relaxed version of \eqref{eq:feascompzf}. Therefore $$\tilde{\uuu}_\mathrm{o}^T\tilde{\uuu}_\mathrm{o}=(\bbb+\ccc)^T\Gammaaa^{1/2}\Sigmaaa(\WWW\WWW^T)^\dagger\Sigmaaa\Gammaaa^{1/2}(\bbb+\ccc)\leq P$$ is a sufficient condition for the feasibility problem \eqref{eq:feascomp} to be solvable.
\end{proof} 

If a solution to \eqref{eq:feascomp} exists, then the relevant problem is to further reduce the transmit power, which is known as power minimization. The precoder is designed to minimize either the total or the peak (per-antenna) transmit power. The latter objective is more realistic as, in practice, many systems are subject to individual per-antenna power constraints \cite{tb_zf, slp_pp}. Accordingly, the DPCIR-based SLP problem minimizing the total transmit power can be formulated as a standard quadratic program (QP), i.e.,
\begin{equation}\label{eq:pm}
\begin{aligned}
\underset{\tilde{\uuu}, \Deee\succeq\mathbf{0}}{\mathrm{minimize}} & \quad \tilde{\uuu}^T\tilde{\uuu}\\
\mathrm{s.t.} & \quad \WWW\tilde{\uuu} = \Sigmaaa\Gammaaa^{1/2}(\bbb+\ccc)+\Deee.
\end{aligned}
\end{equation}
Denoting the optimal solution of \eqref{eq:pm} by $\tilde{\uuu}^*$, it is naturally expected that $\tilde{\uuu}^{*^T}\tilde{\uuu}^*\leq P$. Replacing $\tilde{\uuu}^T\tilde{\uuu}$ by $\|\tilde{\uuu}\|_{\infty,\mathbb{C}}^2$, the SLP design objective turns to minimize the peak per-antenna transmit power, where by $\|\cdot\|_{\infty,\mathbb{C}}$ we mean the infinity norm over equivalent complex-valued elements. All these variants of the SLP power optimization problem have convex objective functions and constraints, hence are convex, and can efficiently be solved using off-the-shelf methods \cite{convex_boyd}. The feasibility problem \eqref{eq:feascomp} can also be extended to the case with peak per-antenna power constraint if one substitutes $\|\tilde{\uuu}\|_{\infty,\mathbb{C}}^2$ for $\tilde{\uuu}^T\tilde{\uuu}$, and $P/N$ for $P$. Then, the feasibility condition can be written as $\|\WWW^\dagger\Sigmaaa\Gammaaa^{1/2}(\bbb+\ccc)\|_{\infty,\mathbb{C}}^2\leq P/N$. It is worth noting that if this condition holds, then the feasibility condition in Proposition \ref{prop:1} is also satisfied given the norm inequality
$$\|\WWW^\dagger\Sigmaaa\Gammaaa^{1/2}(\bbb+\ccc)\|_2\leq\sqrt{N}\|\WWW^\dagger\Sigmaaa\Gammaaa^{1/2}(\bbb+\ccc)\|_{\infty,\mathbb{C}}.$$

It is possible to further simplify (in terms of computational complexity) the SLP power minimization problem by reducing the number of optimization variables and constraints as below.

\begin{lemma}\label{lem:4}
The QP \eqref{eq:pm} can be reduced to
\begin{equation}\label{eq:pms}
\begin{aligned}
\underset{\Deee\succeq\mathbf{0}}{\mathrm{minimize}} & \quad \left\|\WWW^\dagger\left(\Sigmaaa\Gammaaa^{1/2}(\bbb+\ccc)+\WWWW\Deee\right)\right\|_2^2,\\
\end{aligned}
\end{equation}
for $K\leq N$, where $\WWWW$ is an $L\times L$ diagonal matrix with a diagonal element being one if it corresponds to a user in $\mathcal{K}$, otherwise zero. The optimal transmit vector $\tilde{\uuu}^*$ is then given by
\begin{equation}\label{eq:pmopt}
\tilde{\uuu}^*=\WWW^\dagger\left(\Sigmaaa\Gammaaa^{1/2}(\bbb+\ccc)+\WWWW\Deee^*\right),
\end{equation}
where $\Deee^*$ is the optimum of \eqref{eq:pms}.
\end{lemma}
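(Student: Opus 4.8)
The plan is to solve \eqref{eq:pm} by \emph{partial minimization}: since the objective $\tilde{\uuu}^T\tilde{\uuu}$ does not involve $\Deee$ and the variables are coupled only through the linear equality, I would write
$$\min_{\tilde{\uuu},\,\Deee\succeq\mathbf{0}}\tilde{\uuu}^T\tilde{\uuu}=\min_{\Deee\succeq\mathbf{0}}\;\Big(\min_{\WWW\tilde{\uuu}=\Sigmaaa\Gammaaa^{1/2}(\bbb+\ccc)+\Deee}\tilde{\uuu}^T\tilde{\uuu}\Big),$$
and evaluate the inner problem in closed form for each fixed $\Deee$. The inner problem is the minimum-norm solution of a consistent linear system, which is delivered by the pseudoinverse as $\tilde{\uuu}=\WWW^\dagger(\Sigmaaa\Gammaaa^{1/2}(\bbb+\ccc)+\Deee)$ with optimal value equal to its squared $l_2$ norm; substituting back collapses the joint QP onto a minimization over $\Deee\succeq\mathbf{0}$ alone, and reading off $\tilde{\uuu}^*$ from the optimal $\Deee^*$ gives \eqref{eq:pmopt}. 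Two things then remain: to justify replacing $\Deee$ by $\WWWW\Deee$, and to show that the inner system is consistent so that the pseudoinverse returns an exact (not merely least-squares) solution.

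First I would argue that the components of $\Deee$ indexed by users $k\notin\mathcal{K}$ vanish at every feasible point, so that $\Deee=\WWWW\Deee$ on the whole feasible set. For such a user $\x_{i_k}\in\interior\chi$, so by Lemma \ref{lem:1} its Voronoi region is bounded and, as noted after \eqref{eq:dpcirlinineq}, the region collapses to $\mathcal{D}_{i_k,\text{\tiny DP}}=\{\x_{i_k}\}$, equivalently the recession cone $\{\vvv:\A_{i_k}\vvv\succeq\mathbf{0}\}$ is trivial. The corresponding block $\A_{i_k}\HHH_k\tilde{\uuu}=\sigma_k\sqrt{\gamma_k}(\bbb_{i_k}+\ccc_{i_k})+\Deee_{i_k}$ with $\Deee_{i_k}\succeq\mathbf{0}$ then forces $\HHH_k\tilde{\uuu}=\sigma_k\sqrt{\gamma_k}\x_{i_k}$ and hence $\Deee_{i_k}=\mathbf{0}$, where I use the identity $\A_{i_k}\x_{i_k}=\bbb_{i_k}+\ccc_{i_k}$ that follows directly from the definitions of $\bbb_{i_k}$ and $\ccc_{i_k}$. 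Consequently the feasible set and objective of \eqref{eq:pm} are unchanged if the right-hand side is written as $\Sigmaaa\Gammaaa^{1/2}(\bbb+\ccc)+\WWWW\Deee$; for the boundary users $k\in\mathcal{K}$, Remark 1 ensures $\A_{i_k}$ is $2\times2$ invertible and $\Deee_{i_k}\in\mathbb{R}^2_+$ is genuinely free, so nothing is discarded.

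It remains to handle the inner minimization. For fixed $\Deee\succeq\mathbf{0}$, every block of the equality $\WWW\tilde{\uuu}=\Sigmaaa\Gammaaa^{1/2}(\bbb+\ccc)+\WWWW\Deee$ constrains only the two-dimensional vector $\HHH_k\tilde{\uuu}$, so the system is equivalent to a reduced one of exactly $2K$ equations in the $2N$ unknowns of $\tilde{\uuu}$ (two per user, the redundant interior rows being linear combinations of these since all distance-preserving hyperplanes of $\x_{i_k}$ pass through $\x_{i_k}$). Because $K\le N$ gives $2K\le 2N$ and the stacked matrix $[\HHH_1^T,\dots,\HHH_K^T]^T$ has full row rank almost surely for random channels, the reduced system is solvable for every $\Deee$, so the original system is consistent and $\WWW^\dagger(\Sigmaaa\Gammaaa^{1/2}(\bbb+\ccc)+\WWWW\Deee)$ is an exact minimum-norm solution whose squared norm is the inner optimal value. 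Collecting the three steps gives \eqref{eq:pms}, and \eqref{eq:pmopt} follows by back-substitution. I expect the consistency argument to be the main obstacle: one must establish solvability for \emph{all} feasible $\Deee$, which is exactly where $K\le N$ and the generic full rank of the channel enter; a secondary point of care is that when $K<N$ the matrix $\WWW$ is not of full column rank, so $\WWW^\dagger$ must be understood as the general Moore--Penrose pseudoinverse rather than the formula $(\WWW^T\WWW)^{-1}\WWW^T$ quoted earlier.
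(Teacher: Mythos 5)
Your proposal is correct, and its core mechanism coincides with the paper's: eliminate $\tilde{\uuu}$ by observing that, for each fixed $\Deee$, the inner problem is a minimum-norm solution of a consistent linear system, delivered by the pseudoinverse, so the joint QP collapses onto a minimization over $\Deee\succeq\mathbf{0}$ alone. Where you genuinely diverge is in the supporting arguments, and on both counts your version is stronger. The paper proves the reduction by a case split on $L$: for $L=2N$ it invokes almost-sure invertibility of $\WWW$ to get a bijection $\Deee\mapsto\tilde{\uuu}$, for $L<2N$ it merely notes that \eqref{eq:pmopt} is the least-norm solution of a possibly underdetermined system, and it asserts without argument that $\WWWW$ imposes $\Deee_{i_k}=\mathbf{0}$ for $k\notin\mathcal{K}$. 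You instead (i) derive the $\WWWW$ claim from the singleton structure of interior-point DPCIRs via the identity $\A_{i_k}\x_{i_k}=\bbb_{i_k}+\ccc_{i_k}$, and (ii) prove consistency for \emph{every} $\Deee\succeq\mathbf{0}$ by compressing each user's block to two effective equations on $\HHH_k\tilde{\uuu}$ and invoking the almost-sure full row rank of the stacked $2K\times 2N$ channel matrix --- which is precisely where the hypothesis $K\leq N$ enters, something the paper's proof never makes explicit. Your reduction also covers configurations the paper's dichotomy misses: interior users with $M_{i_k}>2$ neighbors can make $L>2N$ even when $K\leq N$, a case falling outside both branches of the paper's argument. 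Finally, your closing caveat is well taken: for $L<2N$ the matrix $\WWW^T\WWW$ is singular, so the left-inverse formula $(\WWW^T\WWW)^{-1}\WWW^T$ quoted in Proposition \ref{prop:1} is not meaningful there, and $\WWW^\dagger$ must be read as the general Moore--Penrose inverse (equivalently $\WWW^T(\WWW\WWW^T)^{-1}$ under full row rank), which is what both your argument and the paper's implicitly require.
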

\begin{proof}
	To verify the equivalency of problems \eqref{eq:pm} and \eqref{eq:pms}, let consider two cases. If $L=2N$, then $\WWW$ is full-rank almost surely, and hence $\WWW^\dagger=\WWW^{-1}$. As a result, the constraint $\WWW\tilde{\uuu} = \Sigmaaa\Gammaaa^{1/2}(\bbb+\ccc)+\Deee$ in \eqref{eq:pm} gives a unique solution for any fixed $\Deee$. In such case, there would be a bijection from $\Deee$ to $\tilde{\uuu}$, which demonstrates that solely optimizing $\Deee$ is equivalent to optimizing both $\tilde{\uuu}$ and $\Deee$. In the other case with $L<2N$, the constraint $\WWW\tilde{\uuu}=\Sigmaaa\Gammaaa^{1/2}(\bbb+\ccc)+\Deee$ may have more than one solution, but \eqref{eq:pmopt} is its least-norm solution which is in line with the objection function of the QP \eqref{eq:pm}. Furthermore, the diagonal matrix $\WWWW$ imposes $\Deee_{i_k}=\mathbf{0}$ for any $k\notin\mathcal{K}$, if exists.
\end{proof}

\subsection{DPCIR-based SLP SINR Balancing}\label{subsec:dpcirsb}

In a downlink scenario where power is a strict transmit restriction, fairness might be a relevant design criterion \cite{tb_conic}. In this paper, we are interested in max-min fairness criterion under which the SLP design problem aims at maximizing the worst SINR among all users, limited by a total transmit power $P$. Assuming the CIRs to be either distance preserving or union bound \cite{slp_gen}, the problem is not convex in its original form. In this section, we first provide an overview and discuss the methods presented in the literature to solve the SLP max-min SINR. Then we derive several alternate convex formulations for this problem. All the proposed methods are simulated in Section \ref{sec:sim} with a detailed discussion on complexity and performance of each solution method. 

One may tackle the SLP max-min SINR by exploiting its connection to the power minimization, as proposed in \cite{slp_con}. By considering the DPCIR-based design as a generalization of \cite{slp_con}, this method iteratively solves

\begin{equation}\label{eq:pmw}
\begin{aligned}
\tilde{\uuu}_\mathrm{PM}(\Gammaaa^*)=
\mathrm{arg}\underset{\tilde{\uuu},\Deee\succeq\mathbf{0}}{\mathrm{min}} & \enspace \tilde{\uuu}^T\tilde{\uuu}\\
\mathrm{s.t.} & \enspace \WWW\tilde{\uuu}=\Sigmaaa\Gammaaa^{*^{1/2}}\;(\bbb+\ccc)+\Deee,
\end{aligned}
\end{equation}
where $\Gammaaa^*=\mathrm{blkdiag}(\gamma_1^*,...,\gamma_K^*)$ is the input vector of SINR thresholds given by the optimal solution of
\begin{equation}\label{eq:sbw}
\begin{aligned}
\tilde{\uuu}_\mathrm{SB}(P)\!=\!
\mathrm{arg}\underset{\tilde{\uuu},\Gammaaa,\Deee\succeq\mathbf{0}}{\mathrm{max}} & \enspace \underset{k}{\min}\left\{\gamma_k\right\}_{k=1}^K \\
\mathrm{s.t.} & \enspace \WWW\tilde{\uuu}=\Sigmaaa\Gammaaa^{1/2}\;(\bbb+\ccc)+\Deee,\\
& \enspace \tilde{\uuu}^T\tilde{\uuu}\leq P,
\end{aligned}
\end{equation}
until the minimum power solution of \eqref{eq:pmw} converges to $P$. It can be inferred that the power optimization \eqref{eq:pmw} and the max-min SINR \eqref{eq:sbw} are related as
\begin{equation}\label{eq:pmsm} \tilde{\uuu}_\mathrm{PM}(\Gammaaa^*)=\tilde{\uuu}_\mathrm{SB}\Big(\tilde{\uuu}_\mathrm{PM}(\Gammaaa^*)^T\tilde{\uuu}_\mathrm{PM}(\Gammaaa^*)\Big).
\end{equation}

In fact, $\gamma_k$ in \eqref{eq:sbw} manipulates the instantaneous average power of the constellation, from which $\mathcal{D}_{i_k,\text{\tiny DP}}$ is constructed, in order to ensure $\tilde{\uuu}^T\HHH_k^T\HHH_k\tilde{\uuu}\geq\sigma^2\gamma_k$ through the first constraint. This is a conservative way to guarantee that the long-term (e.g., frame-level) SINRs satisfy $\mathbb{E}\{\tilde{\uuu}^T\HHH_k^T\HHH_k\tilde{\uuu}\}/\sigma^2\geq\gamma_k, k=1,...,K$, which is typically desired in conventional multiuser precoding \cite{tb_coor}.
The optimal solution $\gamma_k^*$, however, causes $\HHH_k\tilde{\uuu}$ to lie on $\sqrt{\gamma_k^*}\mathcal{D}_{i_k,\text{\tiny DP}}$, instead of $\mathcal{D}_{i_k,\text{\tiny DP}}$. Since $\gamma_k^*$ is a function of the users' symbol vector $\s$, it varies over symbol time, limiting the applicability of this method to constant envelope modulations. For generic constellations, possibly having points with bounded decision regions, the $k$-th receiver needs to be aware of the value of $\gamma_k^*$ in each symbol period in order to correctly detect $s_k$, which is practically unrealistic. It is important to note that we are not allowed to reformulate \eqref{eq:sbw} by excluding the constraints related to the users $k\notin\mathcal{K}$, as the power optimization \eqref{eq:pmw} needs to take into account all the users' symbols in order to guarantee the given SINR thresholds for all $K$ users.

Assuming identical noise distributions across the receivers, i.e., $\sigma_k^2=\sigma^2, k=1,...,K$, the symbol-level SINR for user $k$ is proportional to the instantaneous received power by the $k$-th receiver at each symbol time. On this account, the DPCIR-based SLP max-min SINR problem can be formulated as
\begin{equation}\label{eq:sb}
\begin{aligned}
\underset{\tilde{\uuu}, \Deee\succeq\mathbf{0}}{\mathrm{maximize}} & \quad\underset{k}{\min}\left\{\tilde{\uuu}^T\HHH_k^T\HHH_k\tilde{\uuu}\right\}_{k\in\mathcal{K}} \\
\mathrm{s.t.} & \quad \WWW \tilde{\uuu}=\sigma\;(\bbb+\ccc)+\Deee,\\
& \quad \tilde{\uuu}^T\tilde{\uuu}\leq P.
\end{aligned}
\end{equation}
By introducing a slack variable $\lambda$, one can recast \eqref{eq:sb} as
\begin{equation}\label{eq:sbt}
\begin{aligned}
\underset{\tilde{\uuu}, \Deee\succeq\mathbf{0}}{\mathrm{maximize}} & \quad \lambda \\
\mathrm{s.t.} & \quad \WWW\tilde{\uuu}=\sigma\;(\bbb+\ccc)+\Deee,\\
& \quad \tilde{\uuu}^T\HHH_k^T\HHH_k\tilde{\uuu}\geq \lambda, k\in\mathcal{K},\\
& \quad \tilde{\uuu}^T\tilde{\uuu}\leq P,
\end{aligned}
\end{equation}
which is not convex due to the third set of constraints. In order to deal with this problem, we use the properties of DPCIRs derived in Section \ref{sec:cir}. According to Remark 1, any point in $\mathcal{D}_{{i_k},\text{\tiny DP}}$ can be uniquely specified by $\Deee_{i_k}=[\delta_{i_k,1},\delta_{i_k,2}]^T\in\mathbb{R}^2_+$ for all $\x_{i_k}\in\bd\chi$. It then follows from Theorem \ref{thm:1} that $\tilde{\uuu}^T\HHH_k^T\HHH_k\tilde{\uuu}=\|\HHH_k\tilde{\uuu}\|^2_2$ is strictly increasing in each element of $\Deee_{i_k}$ for all $k\in\mathcal{K}$, i.e., letting either $\delta_{i_k,1}$ or $\delta_{i_k,2}$ be fixed, $\tilde{\uuu}^T\HHH_k^T\HHH_k\tilde{\uuu}$ is a monotonically increasing function of the other. This suggests that in case the optimal value of one of the elements, e.g., $\delta_{i_k,1}$, is given for any user $k\in\mathcal{K}$, then maximizing $\tilde{\uuu}^T\HHH_k^T\HHH_k\tilde{\uuu}$ is equivalent to maximizing $\delta_{i_k,2}$. In other words, by fixing one of the variables $\delta_{i_k,1}$ or $\delta_{i_k,2}$ for all users $k\in\mathcal{K}$, the problem can be formulated as a convex optimization problem. Assuming $\delta_{i_k,1}, \forall k \in \mathcal{K}$ are fixed, then the convex reformulation of \eqref{eq:sbt} can be written as
\begin{equation}\label{eq:sbtfixed}
\begin{aligned}
\underset{\tilde{\uuu}, \Deee\succeq\mathbf{0}}{\mathrm{maximize}} & \quad \lambda \\
\mathrm{s.t.} & \quad \WWW \tilde{\uuu}=\sigma\;(\bbb+\ccc)+\Deee,\\
& \quad \delta_{i_k,2}\geq \lambda, k\in\mathcal{K},\\
& \quad \tilde{\uuu}^T\tilde{\uuu}\leq P,
\end{aligned}
\end{equation}
where $\delta_{i_k,2}$ is substituted for $\tilde{\uuu}^T\HHH_k^T\HHH_k\tilde{\uuu}$ in \eqref{eq:sbt}. Notice that variables $\Deee_{i_k}, \forall k\notin\mathcal{K}$, which correspond to the constellation points with bounded decision regions, are automatically set to zero by the optimization \eqref{eq:sbtfixed} since the only feasible point satisfying the first constraint for any $k\notin\mathcal{K}$ is $\Deee_{i_k}=\mathbf{0}$.

In theory, achieving the optimum of \eqref{eq:sbt} through \eqref{eq:sbtfixed} requires a complete search over all possible (non-negative) values of $\delta_{i_k,1}, k\in\mathcal{K}$, solving \eqref{eq:sbtfixed} for each choice, and finally picking the maximum among all the candidate solutions. Due to the power limitation induced by $P$, one can bound and discretize the search interval to do an exhaustive search. This reduces the optimization to choose $\delta_{i_k,1}, k\in\mathcal{K}$ from a finite set, but of course leads to a sub-optimal solution. Considering an identical search interval for all users, let $N_\delta$ be the number of discrete values for $\delta_{i_k,1}, k\in\mathcal{K}$, which yields a total number of $N_\delta^{|\mathcal{K}|}$ combinations over all $|\mathcal{K}|$ users. This means that the number of convex problems to be solved every symbol time is of order $N_\delta^{|\mathcal{K}|}$. In general, the gap to the optimal solution depends on $N_\delta$, and also on the bounding accuracy (i.e., whether the search interval includes the optimal value or not). The output of the exhaustive search tends to the optimum of \eqref{eq:sbt} as $N_\delta\rightarrow\infty$, however, the computational complexity grows exponentially with $N_\delta$. Motivated by the very high and impractical complexity of the exhaustive search method, we suggest two more computationally tractable approaches to solve the SLP max-min SINR problem. The proposed alternatives are not equivalent to the original problem \eqref{eq:sbt}, but extensively reduce the computational complexity of the solution method compared to the exhaustive search. In Section \ref{sec:sim}, the loss due to each proposed method will be estimated through simulation results.

\subsubsection{Semidefinite program formulation}

Inspired by the increasing monotonicity of $\tilde{\uuu}^T\HHH_k^T\HHH_k\tilde{\uuu}$ with respect to both elements of $\Deee_{i_k}$ for all $k\in\mathcal{K}$, we propose an alternative way to convert \eqref{eq:sbt} into a convex problem by replacing the non-convex quadratic constraints on $\tilde{\uuu}^T\HHH_k^T\HHH_k\tilde{\uuu}$ with affine constraints on $\Deee_{i_k}$, i.e.,
\begin{equation}\label{eq:sbdu}
\begin{aligned}
\underset{\tilde{\uuu}, \Deee\succeq\mathbf{0}}{\mathrm{maximize}} & \quad \lambda \\
\mathrm{s.t.} & \quad \WWW\tilde{\uuu}=\sigma\;(\bbb+\ccc)+\Deee,\\
& \quad \Deee_{i_k}\succeq \lambda\:\mathbf{1}, k\in\mathcal{K},\\
& \quad \tilde{\uuu}^T\tilde{\uuu}\leq P,
\end{aligned}
\end{equation}
which can be interpreted as jointly maximizing $\delta_{i_k,1}$ and $\delta_{i_k,2}$ over all $k\in\mathcal{K}$. By Schur complement, problem \eqref{eq:sbdu} can be written as
\begin{equation}\label{eq:sbds}
\begin{aligned}
\underset{\tilde{\uuu}, \Deee\succeq\mathbf{0}}{\mathrm{maximize}} & \quad \lambda \\
\mathrm{s.t.} & \quad \WWW\tilde{\uuu}=\Sigmaaa\;(\bbb+\ccc)+\Deee,\\
& \quad \begin{bmatrix}\diag(\Deee_\mathcal{K}) & \mathbf{I}_{2|\mathcal{K}|} \\ \mathbf{I}_{2|\mathcal{K}|} & \lambda\, \mathbf{I}_{2|\mathcal{K}|}\end{bmatrix} \succeq 0,\\
& \quad \begin{bmatrix}1 & \tilde{\uuu}^T \\ \tilde{\uuu} & P \mathbf{I}_{2N}\end{bmatrix}\succeq 0,
\end{aligned}
\end{equation}
where $\Deee_\mathcal{K}\in\mathbb{R}_+^{2|\mathcal{K}|}$ is the vector collecting $\Deee_{i_k}$ for all $k\in\mathcal{K}$, and $\succeq0$ denotes positive semidefinite. Problem \eqref{eq:sbds} is a standard semidefinite program (SDP) and can be solved using known methods. This convex formulation, however, is not expected to achieve the same solution as compared to the original problem \eqref{eq:sbt} since it has a reduced degrees of freedom to maximize the minimum SINR. More precisely, it optimizes $\mathrm{min}\{\delta_{i_k,1},\delta_{i_k,2}\}$ instead of optimizing each of them separately. Nonetheless, the optimal solution of problem \eqref{eq:sbds} can be regarded as a lower bound on the optimum of the SLP max-min SINR. It is also important to note that the SDP \eqref{eq:sbds} is equivalent to the SOCP formulation of SLP SINR balancing proposed for PSK constellations in \cite{slp_chr}. The structure of constraints in both the problems promotes equity in optimizing $\delta_{i_k,1}$ and $\delta_{i_k,2}$ rather than exploiting the entire region to accommodate the received signal; therefore, they achieve optimality if the objective function is defined as the minimum (among $K$ users) instantaneous average power of the receiver's reference constellation. We will clarify the difference with the original max-min SINR problem through the simulation results in Section \ref{sec:sim}.

\subsubsection{Block Coordinate Descent Optimization}

In order to improve the solution of SDP convex formulation \eqref{eq:sbds}, we propose an iterative method based on the block coordinate descent (BCD) algorithm \cite{nonlinear_bert}. The BCD is a family of successive lower-bound maximization methods in which certain approximate version of the objective function is optimized with respect to one block variable at a time, while fixing the rest of the block variables. We denote by $\Deee_{\mathcal{K},1}\in\mathbb{R}_+^{|\mathcal{K}|}$ and $\Deee_{\mathcal{K},2}\in\mathbb{R}_+^{|\mathcal{K}|}$ the vectors (blocks) collecting $\delta_{i_k,1}$ and $\delta_{i_k,2}$ for all $k\in\mathcal{K}$, respectively. The idea behind the BCD-based method is then to successively maximize the worst-user SINR along coordinates $\Deee_{\mathcal{K},1}$ and $\Deee_{\mathcal{K},2}$ until convergence is reached. In more details, defining the monotonically increasing function $f_k: \mathbb{R}_+^2\mapsto\mathbb{R}$ as
\begin{equation}\label{eq:f}
f_k(\delta_{i_k,1},\delta_{i_k,2})=\tilde{\uuu}^T\HHH_k^T\HHH_k\tilde{\uuu}, k\in\mathcal{K},
\end{equation}
the objective function of the SLP max-min SINR can be expressed as
\begin{equation}\label{eq:f2}
\begin{aligned}
g(\Deee_{\mathcal{K},1},\Deee_{\mathcal{K},2}) = \underset{k}{\mathrm{min}}\Big\{f_k(\delta_{i_k,1},\delta_{i_k,2})\Big\}_{k\in\mathcal{K}}.
\end{aligned}
\end{equation}
At a given iteration $n$, each block of variables is updated by using the following objective functions (the constraints are as before)
\begin{equation}\label{eq:f3}
\begin{aligned}
\Deee^*_{\mathcal{K},1|n} = \mathrm{arg}\underset{\Deee_{\mathcal{K},1}}{\mathrm{max}}\quad g(\Deee_{\mathcal{K},1},\Deee^*_{\mathcal{K},2|n-1}),
\end{aligned}
\end{equation}
\begin{equation}\label{eq:f4}
\begin{aligned}
\Deee^*_{\mathcal{K},2|n} = \mathrm{arg}\underset{\Deee_{\mathcal{K},2}}{\mathrm{max}}\quad g(\Deee^*_{\mathcal{K},1|n-1},\Deee_{\mathcal{K},2}),
\end{aligned}
\end{equation}
where $\Deee^*_{\mathcal{K},1|n}$ and  $\Deee^*_{\mathcal{K},2|n}$ denote respectively the optimal solutions obtained at the $n$-th iteration, and $g(\Deee_{\mathcal{K},1},\Deee^*_{\mathcal{K},2|n-1})$ and $g(\Deee^*_{\mathcal{K},1|n-1},\Deee_{\mathcal{K},2})$ are approximations of $g(\Deee_{\mathcal{K},1},\Deee_{\mathcal{K},2})$.  We adopt a cyclic updating rule, i.e., the BCD cyclically solves the two SDPs
\begin{equation}\label{eq:sbccd1}
\begin{aligned}
\underset{\tilde{\uuu}, \Deee\succeq\mathbf{0}}{\mathrm{maximize}} & \quad \lambda \\
\mathrm{s.t.} & \quad \WWW\tilde{\uuu}=\Sigmaaa\;(\bbb+\ccc_\text{\tiny DP})+\Deee,\\
& \quad \begin{bmatrix}\diag(\Deee_{\mathcal{K},1}) & \mathbf{I} \\ \mathbf{I} & \lambda \mathbf{I}\end{bmatrix}\succeq 0,\\
& \quad \begin{bmatrix}1 & \tilde{\uuu}^T \\ \tilde{\uuu} & P \mathbf{I}\end{bmatrix}\succeq 0,
\end{aligned}
\end{equation}
and
\begin{equation}\label{eq:sbccd2}
\begin{aligned}
\underset{\tilde{\uuu}, \Deee\succeq\mathbf{0}}{\mathrm{maximize}} & \quad \lambda \\
\mathrm{s.t.} & \quad \WWW\tilde{\uuu}=\Sigmaaa\;(\bbb+\ccc_\text{\tiny DP})+\Deee,\\
& \quad \begin{bmatrix}\diag(\Deee_{\mathcal{K},2}) & \mathbf{I}\\ \mathbf{I} & \lambda \mathbf{I}\end{bmatrix}\succeq 0,\\
& \quad \begin{bmatrix}1 & \tilde{\uuu}^T \\ \tilde{\uuu} & P \mathbf{I}\end{bmatrix}\succeq 0,
\end{aligned}
\end{equation}
where the dimensions of identity matrices in \eqref{eq:sbccd1} and \eqref{eq:sbccd2} are the same as in \eqref{eq:sbds}.
Each SDP is solved with respect to one of the blocks $\Deee_{\mathcal{K},1}$ or $\Deee_{\mathcal{K},2}$ while the other block is fixed and given by the solution of the previous iteration. The pseudocode of the proposed method is given in Algorithm \ref{alg:1}, where we have arbitrarily initialized $\Deee^*_{\mathcal{K},2}$. For all iterations $n=1,2,...$, we have
\begin{equation}\label{eq:con1}
\begin{aligned}
\Deee^*_{\mathcal{K},1|n-1} \preceq \Deee^*_{\mathcal{K},1|n},\;\;
\Deee^*_{\mathcal{K},2|n-1} \preceq \Deee^*_{\mathcal{K},2|n},
\end{aligned}
\end{equation}
and hence
\begin{equation}\label{eq:con1}
\lambda_{|n-1}^* \leq \lambda_{|n}^*,
\end{equation}
where by $\lambda_{|n}^*$ we denote the optimal solution at the $n$-th iteration.
The sequence $\{\lambda_{|n}^*\}_{n=1,2,...}$ is therefore guaranteed to converge to a stationary
point (local optimum) of the SLP max-min SINR. As we will see in Section \ref{sec:sim}, the BCD algorithm usually converges after a few iterations.

\begin{algorithm}[!tp]
	\caption{Block Coordinate Descent Algorithm to solve the DPCIR-based SLP max-min SINR}
	\label{alg:1}
	\begin{algorithmic}[1]
		\linespread{1.2}
        \State \bf{input:} $\s,\{\h_k\}_{k=1}^K,\Sigmaaa,P, \epsilon$
        \State \bf{initialize:} $n\leftarrow 0, \Deee^*_{\mathcal{K},2|0}\leftarrow \mathbf{0}_{|\mathcal{K}|}$
		\Repeat
		\State $n \leftarrow n+1$
		\If    {$n\; \mathrm{is\;odd}$}
		\State $\Deee^*_{\mathcal{K},2|n}\leftarrow\Deee^*_{\mathcal{K},2|n-1}$
        \State solve \eqref{eq:sbccd1}
		\State return $\lambda_{|n}^*, \Deee^*_{\mathcal{K},1|n}$
		\Else
		\State $\Deee^*_{\mathcal{K},1|n}\leftarrow\Deee^*_{\mathcal{K},1|n-1}$
		\State solve \eqref{eq:sbccd2}
		\State return $\lambda_{|n}^*, \Deee^*_{\mathcal{K},2|n}$
		\EndIf
		\Until $|\lambda_{|n}^*-\lambda_{|n-1}^*|\leq\epsilon$
        \State \bf{output:} $\tilde{\uuu}$
	\end{algorithmic}
\end{algorithm}

\section{Simulation Results}\label{sec:sim}

In this section, we provide some simulation results to validate the analytical discussion in earlier sections and also to evaluate the performance of the proposed SLP design approaches. We compare the results with state of the art. In all the simulations, we consider a downlink multiuser unicast scenario (with equal number of transmit and receiver antennas, i.e., $N=K$) in which intended symbols of all the users are taken from an identical constellation set. We examine the results for three constellations, namely, 8-PSK, optimized 8-ary and 16-QAM; however, we are particularly interested in the optimized 8-ary constellation since it has a generic shape with unequal distances as well as points with both bounded and unbounded Voronoi region. We assume the variance of the noise component at the receiver of each user to be $\sigma_k^2=\sigma^2=1,k=1,...,K$. Furthermore, we assume equal SINR thresholds $\gamma_k=\gamma,k=1,...,K$ when the power minimization is of interest. A quasi-static Rayleigh fading channel is assumed where the complex channel vector $\h_k, k=1,...,K$ is generated following an i.i.d. complex Gaussian distribution with zero-mean and unit variance, with assumption $\mathbb{E}\{\h_k\h_j^H\}=0, \forall j=1,...,K,j\neq k$. As for the BCD algorithm, we set the terminating condition $\epsilon=10^{-3}$ with a maximum number of iterations 100.

For a power-limited downlink scenario with $N=K=4$, the feasibility probability of the DPCIR-based SLP scheme is obtained (based on Proposition \ref{pro:0}) and shown in Fig. \ref{fig:PF_P}. The adopted constellation is the optimized 8-ary. The probabilities are calculated by averaging over all $2^{12}$ possible combinations of the users' symbol vector $\s$, and also averaging over $1000$ randomly generated channel realizations. It can be noticed that for smaller values of $\gamma$, the probability of feasibility grows faster as a function of the available transmit power $P$. A case-specific example could be wireless systems with adaptive coding and modulation (ACM) capability, such as DVB-S2X broadcasting standard \cite{dvbs2x}. In DVB-S2X, the target range of SNR for an 8-ary constellation is typically around $5$-$7$ dB over a linear channel (notice that in SLP, SINR can be interpreted as the received SNR). In such system with a total power budget of at least $130$ dBW, one can say from Fig. \ref{fig:PF_P} that providing all the users with an SINR (SNR) level of $\gamma=5$ dB is guaranteed by $90\%$, and further reduction of transmit power may be possible via the SLP power optimization.
\begin{figure}
	\centering
	\includegraphics[width=.55\columnwidth]{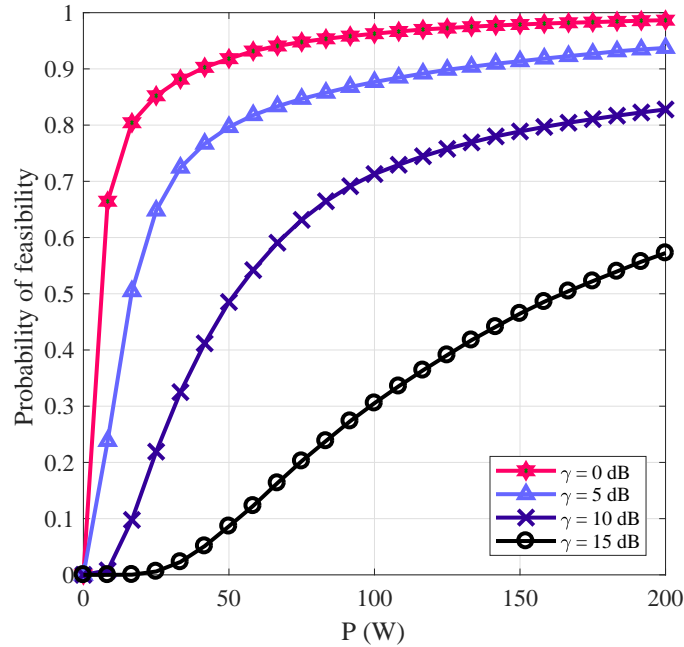}
	\caption{Feasibility probability of SLP as a function of the transmit power budget for different SINR thresholds with $N=K=4$.}
	\label{fig:PF_P}
\end{figure}
\begin{figure}
	\centering
	\includegraphics[width=.55\columnwidth]{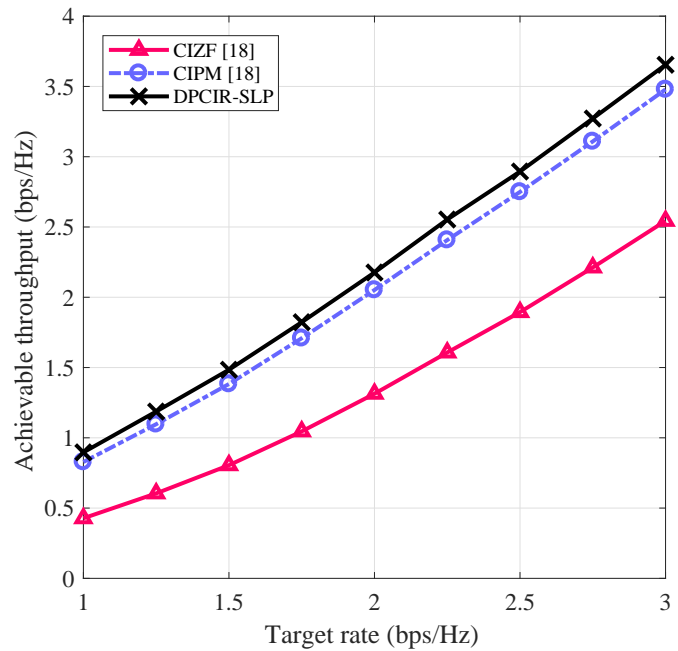}
	\caption{Average per-user achievable throughput versus target rate with $N=K=8$.}
	\label{fig:TH_R}
\end{figure}
\begin{figure}
	\centering
	\includegraphics[width=.55\columnwidth]{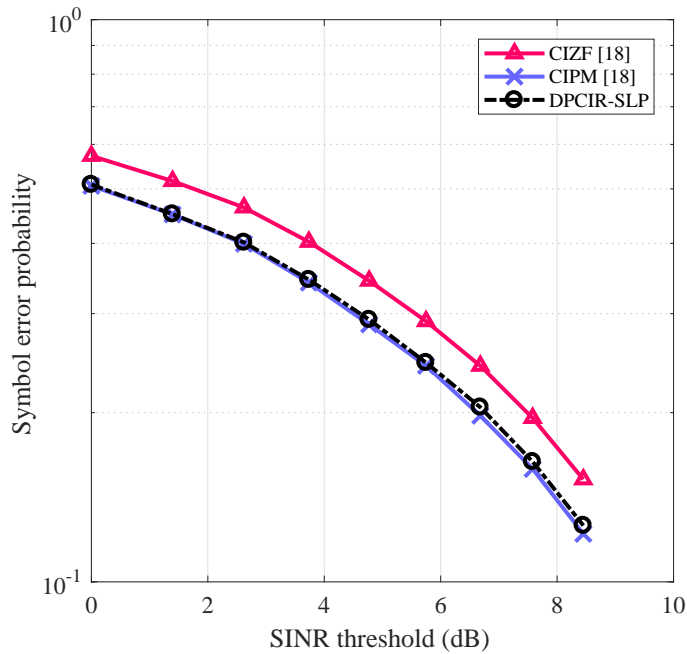}
	\caption{Average symbol error probability versus SINR threshold with $N=K=8$.}
	\label{fig:SER_SINR}
\end{figure}
In Fig. \ref{fig:TH_R}, we plot the average achievable throughput of $K=8$ users under the SLP power minimization scheme as a function of a given target rate $R$, where the target rate is related to the SINR threshold by $R=\log_2{(1+\gamma)}$. The number of BS's transmit antennas is $N=8$ and the optimized 8-ary constellation is employed. The achievable throughput for user $k$ is defined to be equal to
\begin{equation}\label{eq:th}\nonumber
(1-\mathrm{SER}_k)\log_2{\left(1+\mathbb{E}\left\{\|\h_k\uuu\|_2^2\right\}\right)}
\end{equation}
where $\mathrm{SER}_k$ is the symbol error rate of the $k$-th user, and the expectation is taken over each frame. In addition to the DPCIR-based SLP design, the results are obtained for two other SLP approaches, namely, constructive interference zero-forcing (CIZF) and constructive interference power minimization (CIPM) \cite{slp_con}. The proposed DPCIR based scheme outperforms both CIZF and CIPM. 
It can also be observed that both the DPCIR-based and the CIPM symbol-level precoders provide higher achievable throughputs than the given target rate. 
Moreover, under the same scenario, the average symbol error probability over all $K$ users is depicted versus SINR threshold in Fig. \ref{fig:SER_SINR}. As it can be seen, defining the CIR constraints of the SLP power optimization to be distance preserving causes a very slight difference in the average SER compared to the CIPM approach in which the phase of the noise-free received signal is pushed to agree with that of the original constellation point. Overall, with respect to Fig. \ref{fig:TH_R}, the DPCIR-based SLP shows a better performance than the CIPM in terms of the achievable throughput \eqref{eq:th} which takes into account both the shape of the CIR and the resulting SER.

Figure \ref{fig:SCATTER} shows the scatter plot of $K\times10^3$ noise-free received signals in a scenario with $K=N=8$ and $\gamma=15$ dBW, where all the transmitted symbols are drawn from 8-PSK constellation and mapped to transmit antennas via a DPCIR-based SLP max-min SINR precoder. This figure confirms the discussion in Section \ref{sec:slp} regarding the relative location of the noise-free received signal on its corresponding DPCIR. It can be seen from Fig. \ref{fig:SCATTER} that the density of signals resulted from the BCD algorithm is higher in areas closer to the boundaries of DPCIRs, while those signals from the convex approximation are distributed around the bisector (with the majority lying exactly on the bisector). This is a consequence of maximizing (the minimum of) the two parameters $\delta_{i_k,1}$ and $\delta_{i_k,2}$ in \eqref{eq:sbds} which disregards one degree of freedom in optimization for each user $k\in\mathcal{K}$. On the other hand, as it can be seen in Fig. \ref{fig:SCATTER}, the results obtained from the BCD algorithm are biased towards one of the edges in each DPCIR, depending on the initialization step (i.e., $\delta_{{i_k},1}$ and $\delta_{{i_k},2}$). The exact same trend can be observed in Fig. \ref{fig:SCATTER} for the output of the SOCP solving SLP max-min SINR in \cite{slp_chr}. As mentioned in Section \ref{sec:slp}, as far as optimizing the instantaneous average power of the constellation is of concern, the proposed convex approximation \eqref{eq:sbds} and the SOCP \cite{slp_chr} lead to the optimal max-min SINR solution.
\begin{figure}
	\centering
	\includegraphics[width=.55\columnwidth]{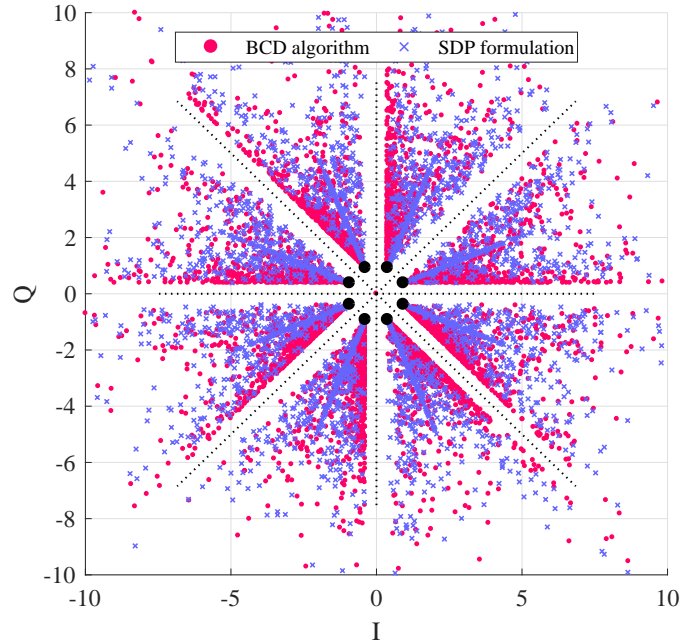}
	\caption{Scatter plot of the noise-free received signals taken form 8-PSK constellation with $N=K=8$ and $\gamma=15$ dBW. The black points and the dashed lines represent the constellation points and their corresponding Voronoi regions, respectively.}
	\label{fig:SCATTER}
\end{figure}
Figures \ref{fig:SINR_P_PSK}-\ref{fig:SINR_P_QAM} plot the optimized worst-user SINR obtained via different SLP SINR balancing approaches for three constellations 8-PSK, optimized 8-ary and 16-QAM, respectively. We also compare the results with those of the maximal fairness zero-forcing precoder \cite{tb_zf} and the bisection algorithm \cite{slp_con}. The method based on exhaustive search is used as a basis for comparison. We separately take $N_{\delta}=5$ and $N_{\delta}=7$ points to search over the interval $[0,2.5]$. The SDP formulation, while being always superior to the maximal fairness ZF precoding by at least $1$ dB, is a lower bound on the optimal SLP max-min SINR solution. The BCD algorithm, on the other hand, provides gains up to $2$ dB with respect to the convex approximation for optimized 8-ary constellation. Furthermore, Fig. \ref{fig:SINR_P_OPT} shows that this successive method could achieve even better solutions than the exhaustive search with $N_{\delta}=7$ when the optimized 8-ary constellation is employed.
\begin{figure}
	\centering
	\includegraphics[width=.55\columnwidth]{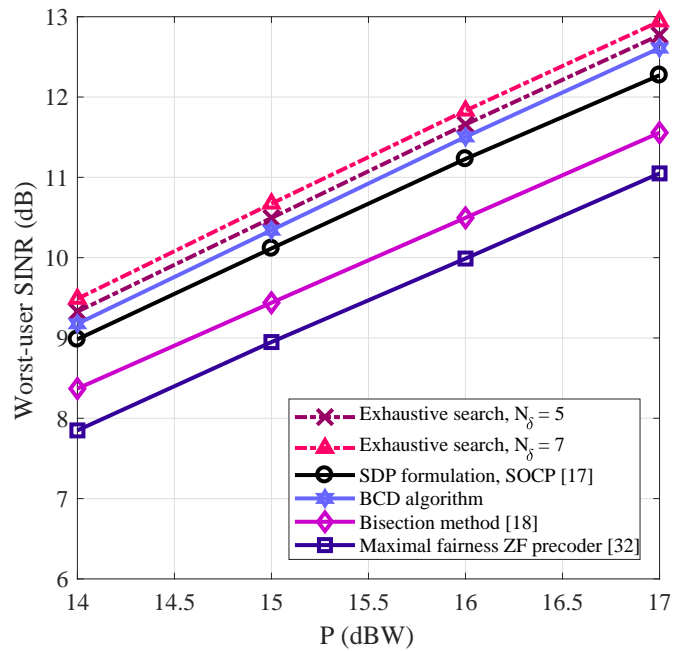}
	\caption{Minimum SINR among $K=4$ users versus total power constraint for 8-PSK constellation.}
	\label{fig:SINR_P_PSK}
\end{figure}
\begin{figure}
	\centering
	\includegraphics[width=.55\columnwidth]{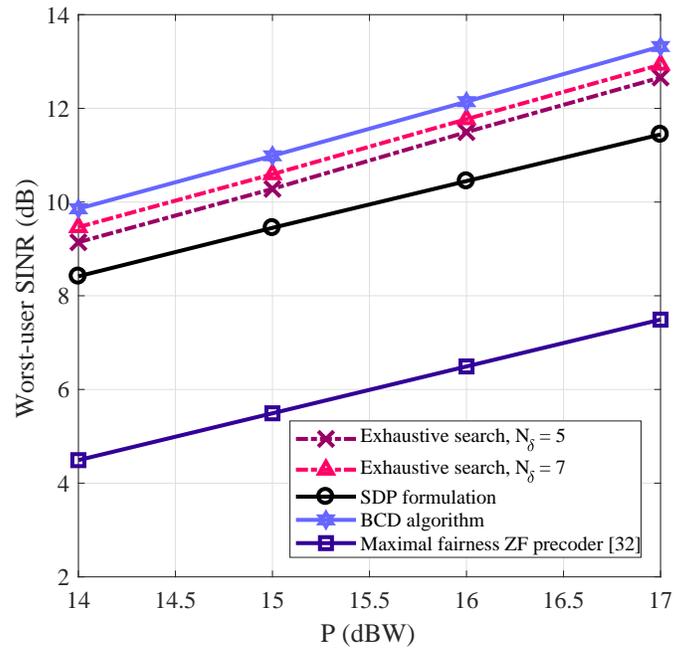}
	\caption{Minimum SINR among $K=4$ users versus total power constraint for the optimized 8-ary constellation.}
	\label{fig:SINR_P_OPT}
\end{figure}
\begin{figure}
	\centering
	\includegraphics[width=.55\columnwidth]{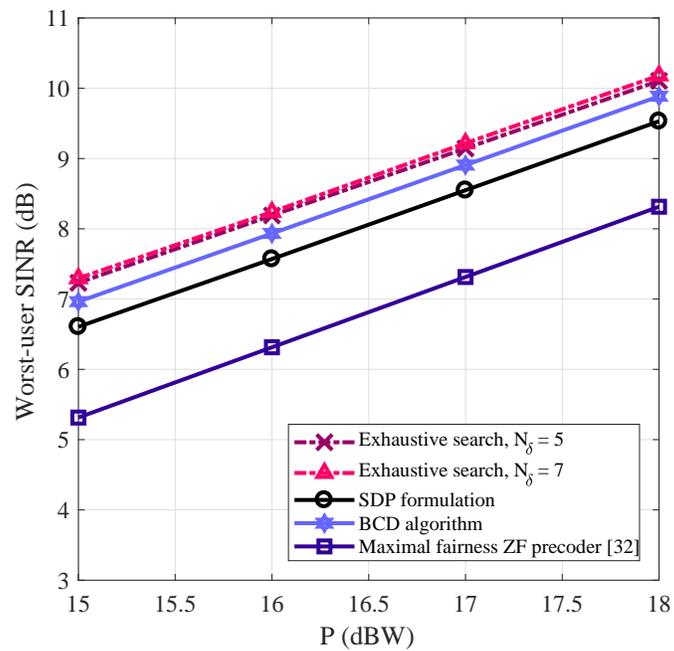}
	\caption{Minimum SINR among $K=4$ users versus total power constraint for 16-QAM constellation.}
	\label{fig:SINR_P_QAM}
\end{figure}
In Fig. \ref{fig:SINR_NK_P}, the optimized worst-user SINR is plotted as a function of the system dimension, where the users' symbols are taken from the optimized 8-ary constellation. As expected, a lower minimum SINR is achieved with increasing the system dimension; however, SINR drops more gradually with respect to the system dimension for larger power budgets. 
\begin{figure}
	\centering
	\includegraphics[width=.55\columnwidth]{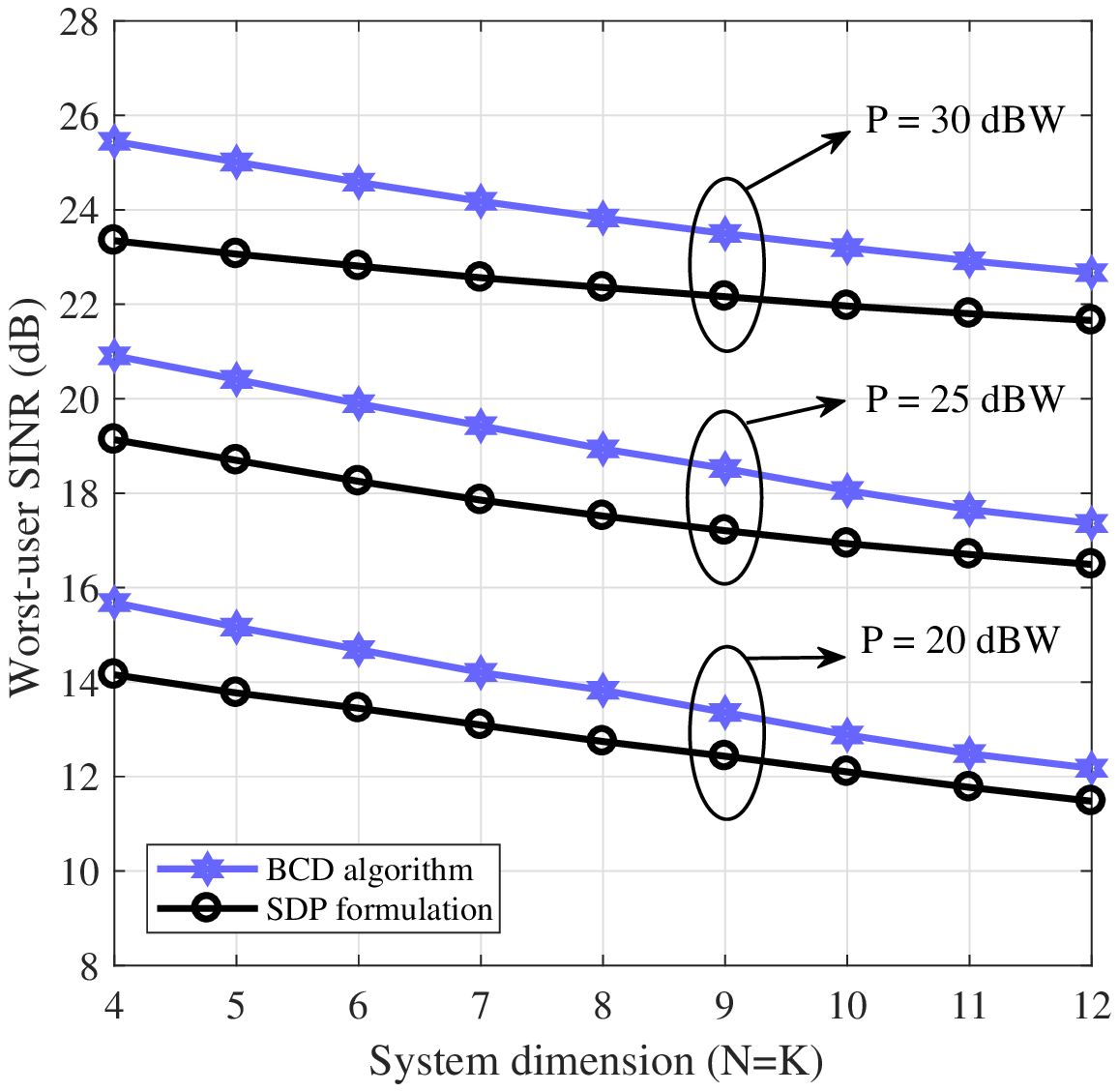}
	\caption{The worst-user SINR as a function of system dimension for different values of total power budget.}
	\label{fig:SINR_NK_P}
\end{figure}
In Fig. \ref{fig:ITR_NK_P}, we present the convergence rate of the BCD algorithm versus the system dimension for different power budgets for 8-PSK and the optimized 8-ary constellation. Here, the convergence rate is expressed in terms of the average number of iterations until the terminating condition is met. It can be seen that the algorithm solving the SLP max-min SINR for $N=K=4$ converges after a few iterations with an average of up to 6 iterations for $P=30$ dB, where each iteration consists a single convex optimization problem. Figure \ref{fig:ITR_NK_P} also shows a slightly slower convergence behavior for higher values of $P$ which is due to a larger feasible region.

\noindent{\it{\bf Complexity comparison:}} In a scenario similar to that of Fig. \ref{fig:SINR_P_PSK} and Fig. \ref{fig:SINR_P_OPT}, i.e., with $N=K=4$ and assuming $N_\delta=5$ and $N_\delta=7$, the exhaustive search method respectively solves $5^4$ and $7^4$ convex optimization problems at each symbol slot and picks the best solution. The SDP formulation, on the other hand, always solves a single convex optimization problem per symbol time. According to Fig. \ref{fig:ITR_NK_P}, the results for the BCD algorithm are obtained through 4 iterations (optimized 8-ary) and 6-8 iterations (8-PSK), on average, where each iteration involves solving a single SDP. The BCD algorithm, though having higher complexity compared to the SDP formulation, shows $1.5$-$2$ dB (optimized 8-ary) and $0.2$-$0.4$ dB (8-PSK) gain in the worst-user SINR (see Fig. \ref{fig:SINR_P_PSK} and \ref{fig:SINR_P_OPT}). Furthermore, the results obtained from the exhaustive search are almost comparable to those of the BCD algorithm, whereas the latter method is far less computationally complex. Therefore, the BCD algorithm provides a very good complexity-performance tradeoff. 
\begin{figure}
	\centering
	\includegraphics[width=.55\columnwidth]{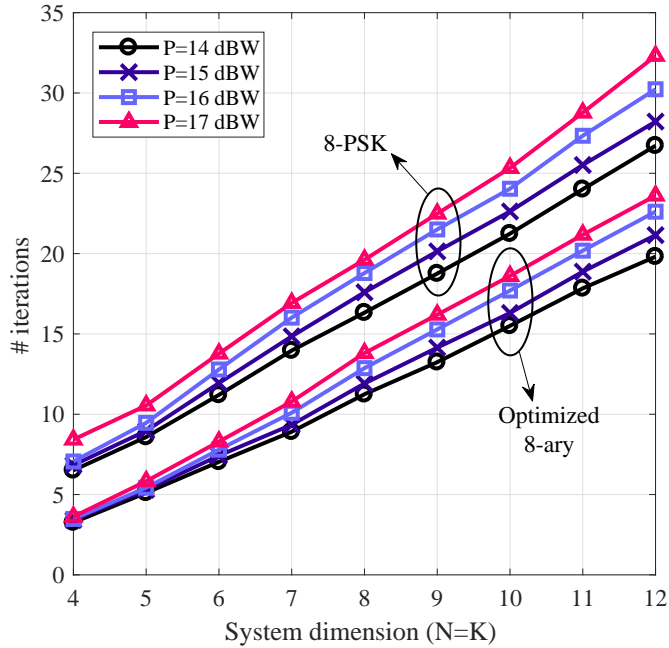}
	\caption{Number of iterations until convergence of the BCD algorithm as a function of system dimension for different values of total power budget.}
	\label{fig:ITR_NK_P}
\end{figure}

\section{Conclusion}\label{sec:conc}
In this paper, we addressed two common precoding design problems in a downlink multiuser channel, namely, power optimization and SINR balancing, on a symbol-level basis. CIRs are the key to define the SLP problem as they determine the constraints yielding constructive interference at the receiver. We considered a general category of CIRs named as DPCIR. Full characterization of DPCIRs for a generic constellation was done which led to extracting some properties for these regions. Using a systematic description for DPCIRs, we formulated and discussed the SLP optimization problems. The SINR-constrained SLP power minimization was investigated in a realistic scenario and a feasibility condition was derived. We also expressed this problem in a simplified equivalent form. For the more challenging and generally non-convex problem of SLP SINR balancing under max-min fairness criterion, the properties of DPCIRs helped us to reformulate the problem in a convex form. Subsequently, two alternative approaches were proposed, namely, SDP formulation and BCD optimization. Finally, we provided a detailed comparison of complexity for the proposed methods.  

\section*{Acknowledgment}
The authors are supported by the Luxembourg National Research Fund (FNR) under CORE Junior project: C16/IS/11332341 Enhanced Signal Space opTImization for satellite comMunication Systems (ESSTIMS).

\appendices
\section{Proof of Lemma \ref{lem:2}}\label{app:lem2}
The intersection of finitely many closed halfspaces is an unbounded polyhedron if and only if the outward normals to the associated boundary hyperplanes lie on a single closed halfspace \cite[p. 20, Theorem 4]{convex_poly}. Accordingly, for any $\x_i\in\chi$ with unbounded $\mathcal{D}_{i,\text{\tiny ML}}$, all the outward normal vectors $-\aaa_{i,j}, j\in\mathcal{J}_i$ lie on a single halfspace. Since the polyhedron $\mathcal{D}_{i,\text{\tiny DP}}$ has the same set of outward normals $-\aaa_{i,j},j\in\mathcal{J}_i$, it is also unbounded. An unbounded polyhedron is uniquely determined from its vertices and the directions of its infinite edges \cite[p. 31, Theorem 4]{convex_poly}. Furthermore, it is straightforward to check that $\x_i$ is the unique solution of $\A_i \x=\bbb_i+\ccc_{i,\text{\tiny DP}}$, i.e., all the contributing hyperplanes have a common intersection point $\x_i$. This means that $\mathcal{D}_{i,\text{\tiny DP}}$, which is given by the solution set of $\A_i \x\succeq\bbb_i+\ccc_{i,\text{\tiny DP}}$, has a single vertex at $\x_i$ and two infinite edges, i.e., a polyhedral angle. In addition, since any two neighboring points share a common Voronoi edge, the two infinite edges of $\mathcal{D}_{i,\text{\tiny DP}}$ correspond to the two neighboring points of $\x_i$ on $\bd\chi$ (i.e., $\mathcal{S}_i\cap\bd\chi$) with unbounded Voronoi regions. Each infinite edge of $\mathcal{D}_{i,\text{\tiny DP}}$ is then parallel to a hyperplane with normal vector $\aaa_{i,j}=\x_i-\x_j$, where $\x_j\in\mathcal{S}_i\cap\bd\chi$; therefore it is perpendicular to $\x_i-\x_j$. This completes the proof.

\section{Proof of Lemma \ref{lem:3}}\label{app:lem3}
In order to prove this lemma, we first state a well-known property of convex sets.

\begin{property}\label{pro:1}
	$\vvv_o$ is the minimum distance vector from the origin to the convex set $\mathcal{V}$ iff for any vector $\vvv\in\mathcal{V}$ we have $\vvv_o^T\vvv\geq\vvv_o^T\vvv_o$, with equality for $\vvv$ lying on the hyperplane orthogonal to $\vvv_o$ \cite[p. 69, Theorem 1]{opt_vec}.
\end{property}

For any $\x_i\in\interior\chi$, Lemma \ref{lem:3} holds straightforwardly as $\mathcal{D}_{i,\text{\tiny DP}}=\x_i$. Therefore, in what follows we only focus on the constellation points belonging to $\bd\chi$.

\emph{Sufficiency}: Having $\mathbf{0}\in\conv\chi$, let further assume that $\mathbf{0}\in\chi$. This assumption, as mentioned earlier in section \ref{sec:cir}, does not have any impact on $\mathcal{D}_{i,\text{\tiny DP}}$ for any $\x_i\in\bd\chi$, regardless of whether $\mathbf{0}\in\bd\chi$ or $\mathbf{0}\in\interior\chi$. By substituting $\x_j=\mathbf{0}$ in \eqref{eq:pro0cons}, for all $\x_i\in\chi$ we have $\|\x\|\geq\|\x_i\|,\forall\x\in\mathcal{D}_{i,\text{\tiny DP}}$. This completes the proof of sufficiency.

\emph{Necessity:} By contradiction, if $\mathbf{0}\notin\conv\chi$, let assume a new constellation set $\tilde{\chi}$ having all the points of $\chi$ including the origin, i.e., $\tilde{\chi}=\chi\cup\{\mathbf{0}\}$, hence $\conv\chi\subset\conv\tilde{\chi}$. Clearly, $\mathbf{0}\in\bd\tilde{\chi}$ and according to Lemma \ref{lem:2}, there always exist exactly two constellation points on $\bd\tilde{\chi}$ that $\mathbf{0}$ contributes to their DPCIRs. Suppose $\x_l$ be one of these points with $\mathcal{D}_{l,\text{\tiny DP}}$ and $\tilde{\mathcal{D}}_{l,\text{\tiny DP}}$ denoting its associated DPCIR relative to $\chi$ and $\tilde{\chi}$, repectively. We denote by $\tilde{\mathcal{S}}_l$ the set of neighboring points of $\x_l$ in $\tilde{\chi}$. Let $\mathcal{H}_{l,o}=\left\{\x\mid\x\in\mathbb{R}^2, \x_l^T\x\geq \x_l^T\x_l\right\}$ be the distance preserving halfspace from $\mathbf{0}$ to $\x_l$. Since $\mathbf{0}\in\tilde{\mathcal{S}}_l$, we have $\tilde{\mathcal{D}}_{l,\text{\tiny DP}}=\mathcal{H}_{l,o}\cap\mathcal{D}_{l,\text{\tiny DP}}\neq\mathcal{D}_{l,\text{\tiny DP}}$, i.e., the halfspace $\mathcal{H}_{l,o}$ does not contain $\mathcal{D}_{l,\text{\tiny DP}}$. Hence, $\left\{\x\mid\x\in\mathbb{R}^2, \x_l^T\x=\x_l^T\x_l\right\}$ is not a supporting hyperplane for $\mathcal{D}_{l,\text{\tiny DP}}$ at $\x_l$ \cite[p. 51]{convex_boyd}. This implies that there exist some $\x\in\mathcal{D}_{l,\text{\tiny DP}}$ for which $\x_l^T\x<\x_l^T\x_l$. According to Property \ref{pro:1} (which gives a necessary and sufficient condition), $\x_l$ is not the minimum distance vector from the origin in $\mathcal{D}_{l,\text{\tiny DP}}$. Consequently, $\|\x\|\geq\|\x_l\|$ does not hold for some $\x\in\mathcal{D}_{l,\text{\tiny DP}}$ which contradicts $\|\x\|\geq\|\x_l\|,\forall\x\in\mathcal{D}_{l,\text{\tiny DP}}$.

\section{Proof of Theorem \ref{thm:1}}\label{app:thm}
In order to prove this theorem we need the following lemma.
\begin{lemma}\label{lem:5}
If $\mathbf{0}\notin\conv\chi$, there exists at least one constellation point $\x_l\in\chi$ for which for any $\x\in\mathcal{D}_{l,\text{\tiny DP}}$, $\mathbf{0}\notin\conv\tilde{\chi}_{\x_l,\x}$, where $\tilde{\chi}_{\x_l,\x}=\chi\cup\{\x\}$.
\end{lemma}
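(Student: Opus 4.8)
The plan is to reduce the combinatorial statement about convex hulls to a single separating-hyperplane condition and then to exploit the explicit polyhedral-angle structure of $\mathcal{D}_{l,\text{\tiny DP}}$ from Lemma \ref{lem:2}. Since $\conv\chi$ is compact and $\mathbf{0}\notin\conv\chi$, I would first take $\vvv$ to be the minimum-distance vector from the origin to $\conv\chi$; by Property \ref{pro:1} this $\vvv$ satisfies $\vvv^T\x_i\geq\vvv^T\vvv>0$ for every $\x_i\in\chi$, so $\vvv$ strictly separates the origin from the constellation. The key observation is that it suffices to produce a single point $\x_l$ for which $\vvv^T\x>0$ holds for every $\x\in\mathcal{D}_{l,\text{\tiny DP}}$: indeed, if $\vvv^T\yyy>0$ for all $\yyy\in\chi\cup\{\x\}$, then applying $\vvv^T$ to any convex combination of these points yields a strictly positive number, so no such combination can equal $\mathbf{0}$, giving $\mathbf{0}\notin\conv\tilde{\chi}_{\x_l,\x}$.

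The second step is to turn ``$\vvv^T\x>0$ on all of $\mathcal{D}_{l,\text{\tiny DP}}$'' into a checkable condition. By Lemma \ref{lem:2} (together with the fact that $\x_l$ is the unique solution of $\A_l\x=\bbb_l+\ccc_l$), the region is a polyhedral angle with apex exactly at $\x_l$, i.e. $\mathcal{D}_{l,\text{\tiny DP}}=\x_l+K_l$ with recession cone $K_l=\{\mathbf{d}\mid\aaa_{l,j}^T\mathbf{d}\geq0,\ j\in\mathcal{J}_l\}$. Since $\vvv^T\x_l>0$ already holds, the inequality extends to the whole region if and only if $\vvv^T\mathbf{d}\geq0$ for every $\mathbf{d}\in K_l$, that is, if and only if $\vvv$ lies in the dual cone $K_l^{\ast}=\mathrm{cone}\{\x_l-\x_p,\x_l-\x_q\}$ generated by the directions to the two $\bd\chi$-neighbours of $\x_l$. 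So the lemma follows once I exhibit a boundary point $\x_l$ whose two incoming edge directions positively span a separating direction.

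The main obstacle is exactly this last existence claim. The natural candidate is the vertex maximizing $\vvv^T\x_i$, whose DPCIR opens outward, away from the constellation and hence away from the origin; I would try to show that $-\vvv$ then lies inside the interior-angle cone at $\x_l$, which is equivalent to $\vvv\in K_l^{\ast}$. The delicate case is when the interior angle at this vertex is acute: there $K_l^{\ast}$ is a narrow cone, and the maximizer property only guarantees $\vvv\in K_l$ (the dual of $K_l^{\ast}$), not $\vvv\in K_l^{\ast}$ itself. To handle this I would not fix $\vvv$ in advance but exploit the whole separating cone $P=\{\vvv\mid\vvv^T\x_i>0,\ \forall i\}$, which is non-empty precisely because $\mathbf{0}\notin\conv\chi$, and argue that the family of incoming-edge cones $K_l^{\ast}$, taken over the vertices of $\conv\chi$, meets $P$; any vertex with $P\cap K_l^{\ast}\neq\emptyset$ together with a direction $\vvv\in P\cap K_l^{\ast}$ then completes the argument. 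Establishing that some vertex satisfies $P\cap K_l^{\ast}\neq\emptyset$ (equivalently, that sweeping the supporting line of $\conv\chi$ in a separating direction enters the interior angle at a suitable vertex) is where the geometric work concentrates, and it is the step I expect to require the most care.
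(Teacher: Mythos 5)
Your reduction steps are individually correct: a single direction $\vvv$ with $\vvv^T\yyy>0$ on $\chi\cup\{\x\}$ does exclude the origin from $\conv\tilde{\chi}_{\x_l,\x}$; the identification $\mathcal{D}_{l,\text{\tiny DP}}=\x_l+K_l$ with dual cone $K_l^{\ast}=\mathrm{cone}\{\x_l-\x_p,\x_l-\x_q\}$ is right; and the lemma would indeed follow from the existence of a vertex with $P\cap K_l^{\ast}\neq\emptyset$. The gap is that this existence claim, which you explicitly leave open, is not a residual technicality --- it is essentially a restatement of the lemma. Indeed, $\mathbf{0}\in\conv\tilde{\chi}_{\x_l,\x}$ for some $\x\in\mathcal{D}_{l,\text{\tiny DP}}$ holds iff $\x\in-\mathrm{cone}(\conv\chi)$ (write $\mathbf{0}$ as a convex combination of $\x$ and a point of $\conv\chi$, and use $\mathbf{0}\notin\conv\chi$), so the lemma says precisely that some $\x_l+K_l$ is disjoint from the closed convex cone $-\mathrm{cone}(\conv\chi)$; conversely, separating these two disjoint polyhedral convex sets returns a $\vvv\in K_l^{\ast}$ with $\vvv^T\x_i\geq0$ for all $i$, i.e., your claim up to strictness. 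So the proposal reduces the lemma to an equivalent statement and stops exactly where the work begins. Moreover, the vertex genuinely must be chosen adaptively, so no fixed natural candidate closes the gap: if the nearest point of $\conv\chi$ to the origin happens to be a vertex $\x_l$, then $-\x_l\in K_l$ by the projection characterization, hence every $\vvv\in K_l^{\ast}$ satisfies $\vvv^T\x_l\leq0$ and $P\cap K_l^{\ast}=\emptyset$ for that vertex; and, as you yourself note, at an acute vertex the maximizer property only yields $\vvv\in K_l=(K_l^{\ast})^{\ast}$, which does not place $\vvv$ in $K_l^{\ast}$. The sweeping/selection argument you sketch in the last sentences is exactly the missing proof, and it is not supplied.

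For comparison, the paper avoids the cone geometry altogether with a short set-theoretic contradiction: it defines $\mathcal{C}_i=\bigcup_{\x\in\mathcal{D}_{i,\text{\tiny DP}}}\conv\tilde{\chi}_{\x_i,\x}$, asserts from the definition of the convex hull that $\conv\chi=\bigcap_{\x_i\in\chi}\mathcal{C}_i$, and concludes that if the lemma failed at every $\x_i$ then $\mathbf{0}\in\mathcal{C}_i$ for all $i$, forcing $\mathbf{0}\in\conv\chi$, a contradiction. Your dual-cone framework is geometrically more explicit and could plausibly be completed (e.g., by sweeping the separating cone $P$ across the fan of incoming-edge cones of the vertices of $\conv\chi$ and showing some $K_l^{\ast}$ must be met), but as submitted it is an incomplete argument rather than a proof.
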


\begin{proof}
	If $\mathbf{0}\notin\conv\chi$, for any $\x_i\in\chi$ and any $\x\in\mathcal{D}_{i,\text{\tiny DP}}$ with $\tilde{\chi}_{\x_i,\x}=\chi\cup\{\x\}$, let define $\mathcal{C}_i=\underset{\x\in\mathcal{D}_{i,\text{\tiny DP}}}\bigcup\conv\tilde{\chi}_{\x_i,\x}$. Having $\conv\chi\subseteq\conv\tilde{\chi}_{\x_i,\x}$, it follows from the definition of convex hull that $\conv\chi=\underset{\x_i\in\chi}\bigcap\mathcal{C}_i$. If $\mathbf{0}\in\mathcal{C}_i, \forall\x_i\in\chi$, then $\mathbf{0}\in\conv\chi$ which contradicts our assumption. Hence there must exist at least one constellation point, say $\x_l$, for which $\mathcal{C}_l$ and therefore none of $\conv\tilde{\chi}_{\x_l,\x}, \forall\x\in\mathcal{D}_{l,\text{\tiny DP}}$ contains the origin, as required.

\end{proof}
Now, we can complete the proof of Theorem \ref{thm:1} as follows.

\emph{Sufficiency:}
Suppose $\mathbf{0}\in\conv\chi$. Assuming a constellation point $\x_i\in\chi$ and its DPCIR $\mathcal{D}_{i,\text{\tiny DP}}$, let $\yyy_1$ and $\yyy_2$ be two points in $\mathcal{D}_{i,\text{\tiny DP}}$ such that $\A_i \yyy_1=\bbb_i+\ccc_{i,\text{\tiny DP}}+\Deee_{i,1}$ and $\A_i \yyy_2=\bbb_i+\ccc_{i,\text{\tiny DP}}+\Deee_{i,2}$ with $\Deee_{i,1},\Deee_{i,2}\in\mathbb{R}^{M_i}_+$ and $\Deee_{i,1}\prec\Deee_{i,2}$. Let consider a new constellation $\tilde{\chi}=\chi\cup\{\yyy_1\}$. It is clear that $\conv\chi\subseteq\conv\tilde{\chi}$, and therefore $\mathbf{0}\in\conv\tilde{\chi}$. The DPCIR of $\yyy_1$ can be described as $\mathcal{D}_{\yyy_1,\text{\tiny DP}}=\big\{\x\mid\x\in\mathbb{R}^2, \A_i \x=\bbb_i+\ccc_{i,\text{\tiny DP}}+\Deee_{i,1}+\Deee_1, \Deee_1\in\mathbb{R}^{M_i}_+ \big\}$. Let $\bar{\Deee}={\Deee}_{i,2}-{\Deee}_{i,1}$, then $\A_i \yyy_2=\bbb_i+\ccc_{i,\text{\tiny DP}}+\Deee_{i,1}+\bar{\Deee}, \bar{\Deee}\in\mathbb{R}^{M_i}_{++}$, which means that $\yyy_2\in\mathcal{D}_{\yyy_1,\text{\tiny DP}}$. As a consequence, from Lemma \ref{lem:3}, we have $\|\yyy_1\|<\|\yyy_2\|$ and the proof of sufficiency is complete.

\emph{Necessity:} 
By contradiction, suppose $\mathbf{0}\notin\conv\chi$. Then, based on Lemma \ref{lem:5}, there exists a constellation point $\x_l$ for which $\mathbf{0}\notin\conv\tilde{\chi}_{\x_l,\x}, \forall\x\in\mathcal{D}_{l,\text{\tiny DP}}$. Let $\yyy_1\in\mathcal{D}_{l,\text{\tiny DP}}$, then $\A_l \yyy_1=\bbb_l+\ccc_{l,\text{\tiny DP}}+\Deee_{l,1}$ with $\Deee_{l,1}\in\mathbb{R}^{M_l}_+$. The DPCIR associated with $\yyy_1$ can be expressed as $\mathcal{D}_{\yyy_1,\text{\tiny DP}}=\big\{\x\mid\x\in\mathbb{R}^2, \A_l \x=\bbb_l+\ccc_{l,\text{\tiny DP}}+\Deee_{l,1}+\Deee_1, \Deee_1\in\mathbb{R}^{M_l}_+ \big\}$. Since $\mathbf{0}\notin\conv\tilde{\chi}_{\x_l,\yyy_1}$, it follows from Lemma \ref{lem:3} and Property \ref{pro:1} that there exists $\yyy_2\in\mathcal{D}_{\yyy_1,\text{\tiny DP}}$ such that $\A_l \yyy_2=\bbb_l+\ccc_{l,\text{\tiny DP}}+\Deee_{l,1}+\bar{\Deee}, \bar{\Deee}\in\mathbb{R}^{M_l}_{++}$, for which $\|\yyy_2\|<\|\yyy_1\|$. But ${\Deee}_{l,1}+\bar{\Deee}={\Deee}_{l,2}$ yields ${\Deee}_{l,2}\succ{\Deee}_{l,1}$ which is a contradiction. This completes the proof.

\bibliographystyle{IEEEtran}

\end{document}